\newtheorem{prop}{Proposition}
\newcommand{ \bm }[1]{ \mbox{\bf {#1}} }
\newcommand{\by}{\bm{y}}
\newcommand{\beginsupplement}{%
        \setcounter{table}{0}
        \renewcommand{\thetable}{S\arabic{table}}%
        \setcounter{figure}{0}
        \renewcommand{\thefigure}{S\arabic{figure}}%
     }
\begin{document}

\title{Jaccard/Tanimoto similarity test \\ and estimation methods}

\author{{\small Neo Christopher Chung$^\textnormal{1,*}$, B\l{}a{\.z}ej Miasojedow$^\textnormal{2}$, Micha\l{} Startek$^\textnormal{1}$, Anna Gambin$^\textnormal{1}$}}
\date{}
\maketitle 

\vspace*{-5mm}
\begin{center}
{\small $^\textnormal{1}$Institute of Informatics, University of Warsaw \\
$^\textnormal{2}$Institute of Mathematics, Polish Academy of Sciences \\
$^\textnormal{*}$nchchung@gmail.com}
\end{center} 
\vspace*{10mm}

A survey of presences and absences of specific species across multiple biogeographic units (or bioregions) are used in a broad area of biological studies from ecology to microbiology. Using binary presence-absence data, we evaluate species co-occurrences that help elucidate relationships among organisms and environments. To summarize similarity between occurrences of species, we routinely use the Jaccard/Tanimoto coefficient, which is the ratio of their intersection to their union. It is natural, then, to identify statistically significant Jaccard/Tanimoto coefficients, which suggest non-random co-occurrences of species. However, statistical hypothesis testing using this similarity coefficient has been seldom used or studied.

We introduce a hypothesis test for similarity for biological presence-absence data, using the Jaccard/Tanimoto coefficient. Several key improvements are presented including unbiased estimation of expectation and centered Jaccard/Tanimoto coefficients, that account for occurrence probabilities. The exact and asymptotic solutions are derived. To overcome a computational burden due to high-dimensionality, we propose the bootstrap and measurement concentration algorithms to efficiently estimate statistical significance of binary similarity. Simulation studies demonstrate that our proposed methods produce accurate p-values and false discovery rates. The proposed estimation methods are orders of magnitude faster than the exact solution, particularly with an increasing dimensionality. We showcase their application in evaluating co-occurrences of bird species in 28 islands of Vanuatu and fish species in 3347 freshwater habitats in France. The proposed methods are implemented in a R package called {\tt jaccard} (\url{https://cran.r-project.org/package=jaccard}).

We introduce a suite of statistical methods for the Jaccard/Tanimoto similarity coefficient, that enable straightforward incorporation of probabilistic measures in analysis for species co-occurrences. Due to their generality, the proposed methods and implementations are applicable to a wide range of binary data arising from genomics, biochemistry, and other areas of science.  \\
\noindent {\small Keyword: Jaccard, Tanimoto, binary similarity, hypothesis test, co-occurrences, p-value} \\
\noindent {\small Funding: Narodowe Centrum Nauki 2014/12/W/ST5/00592 and 2016/23/D/ST6/03613}
\clearpage
\section*{Background}
Analysis of species co-occurrences helps us understand ecological and biological relationships among species. Essentially, the presence (1) and absence (0) of species are surveyed in multiple biogeographic units (or bioregions) using fieldwork, imaging, sequencing, and other techniques. Then, the Jaccard/Tanimoto coefficient is one of the most fundamental and popular similarity measures to compare such biological presence-absence data. Given two presence-absence vectors $\by_i$ and $\by_j$ of length $m$ that represent two different species, the Jaccard/Tanimoto similarity coefficient is the ratio of their intersection to their union, $T(\by_i,\by_j) = \by_{i} \cap \by_{j} / \by_{i} \cup \by_{j}$ \cite{Jaccard1912,Tanimoto1958}. This quantification of overlaps allows us to quantify co-existence of species \cite{Birks1987, Jackson1992, RealVargas1996, Manly2006}. However, the Jaccard/Tanimoto coefficient lacks probabilistic interpretations or statistical error controls. Surprisingly, its statistical properties, hypothesis testing, and estimation methods for p-values have been inadequately studied. Here, we present a rigorous statistical test evaluating the similarity in presence-absence data, derive exact and asymptotic solutions, and introduce efficient estimation methods for significance of the Jaccard/Tanimoto similarity coefficient. 

Generally, analysis of co-occurrences enables us to distinguish generalist species that survive in a broad range of environments from specialists that only thrive in a few localities \cite{Davies1993, Townsend2002}. Alternatively, similarity between two localities -- how two biogeographic units share an overlapping set of species -- sheds light on the beta diversity that may arise from ecological processes over time \cite{Whittaker1960,Harrison1992,Koleff2003}. There has been a long standing discussion on how to conduct association analysis for occurrences of species, including appropriate null models and evaluation techniques \cite{Connor1979, Diamond1982, Gilpin1982, Wilson1987, Manly1995, Sanderson1998}. There are also specialized probabilistic approaches, including metrics related to the Jaccard/Tanimoto coefficient \cite{Ellwood2009, Chase2011, Fridley2007, Araujo2013}. Yet, these studies rarely utilized statistical significance. Therefore, we investigated a hypothesis test using the Jaccard/Tanimoto coefficient that underlies or accompanies most of such association analyses.

The Jaccard/Tanimoto coefficient measuring similarity between two species has long been used to evaluate co-occurrences between species or between biogeographic units \cite{Baroni-Urbani1976, Baroni-Urbani1979, Birks1987, Jackson1992, RealVargas1996, Veech2013}. Pioneering early works on probabilistic treatment of the Jaccard/Tanimoto coefficient assume that the probability of species occurrences is 0.5 \cite{Baroni-Urbani1976, Baroni-Urbani1979, RealVargas1996}. These can be seen as special cases of our methods where both probabilities of $\by_i$ and $\by_j$ are set to 0.5. Recently, \cite{Veech2013} and \cite{Griffith2016} proposed estimating p-values with combinatorics and hypergeometric distributions, respectively. We found that they are inaccurate. To provide a comprehensive statistical treatment, we have developed a suite of methods and estimation techniques for rigorously testing similarity between presence-absence data. 

We derive a hypothesis test from the first principles using the Jaccard/Tanimoto coefficient. In the process, we propose an unbiased estimation of expectation and a centered Jaccard/Tanimoto coefficient that accounts for different probabilities of species occurrences. The negative and positive values of the centered Jaccard/Tanimoto coefficient naturally correspond to negative and positive association. We introduce an exact distribution of Jaccard/Tanimoto similarity coefficients under independence that is shown to provide accurate p-values. Because the exact solution for a large $m$ is computationally expensive, we have developed two efficient and accurate estimation algorithms. We demonstrate their remarkable accuracy and computational efficiency in comprehensive simulation studies, where p-values and false discovery rates (FDRs) are evaluated. As applications, we evaluated co-occurrences of bird species from $m=28$ islands of Vanuatu and of fish species from $m=3347$ freshwater habitats in France. 

All proposed methods are implemented in a statistical programming language R \cite{R}, available on the Comprehensive R Archive Network (\url{https://cran.r-project.org/package=jaccard}). We additionally provide an interactive web app (\url{https://nnnn.shinyapps.io/jaccard}). The implementations are efficient and general, such that the {\tt jaccard} package can rigorously test similarity between binary data arising from genomics, biochemistry, and others.

\section*{Methods}
\subsection*{Statistical Model and Test}

Quantitative comparison of presence-absence data in ecology and biology plays a crucial role in evaluating species co-existences, biodiversities, and ecosystems. In particular, one may be interested in comparing how species are co-occurring in biogeographic units or how biogeographic units are occupied by certain species. Note that species are used generally to indicate groups of organisms under investigations, such as operational taxonomic units (OTUs); similarly, biogeographic units or bioregions could be distinct survey areas, islands, or habitats. We are interested in statistically testing similarity between a pair of presence-absence data.

Given two presence-absence vectors $\by_i$ and $\by_j$ of length $m$, we are interested in inferring whether they are significantly related. Consider presence (1) and absence (0) of two species are recorded at $m$ biogeographic units. We measure their similarity by the ratio of their intersection to their union, $T(\by_i,\by_j) = \by_{i} \cap \by_{j} / \by_{i} \cup \by_{j}$. This is well known as the Jaccard/Tanimoto index or similarity coefficient \cite{Jaccard1912,Tanimoto1958}. In order to utilize the Jaccard/Tanimoto similarity coefficient in a statistically rigorous manner, we propose a family of methods and algorithms (Figure 1).

Under the null model of independence, $\by_i$ and $\by_j$ are assumed to be independent and identically distributed (i.i.d.). They are modeled by a Bernoulli distribution, with corresponding occurrence (i.e., success) probabilities $p_i$ and $p_j$ $\in [0,1]$. Specifically, for $k = 1, \ldots, m$, $\by_{i,k} \sim_{i.i.d.} \textnormal{Bernoulli}(p_i)$ and $\by_{j,k} \sim_{i.i.d.} \textnormal{Bernoulli}(p_j)$. Because this conventional definition is undefined if both binary vectors contain only zeros such that $\by_{i} \cup \by_{j} = 0$, we refine the definition of Jaccard/Tanimoto coefficient
\begin{equation} \label{def:tanimoto}
T(\by_i,\by_j)=
\begin{cases}
\frac{ \by_i \cap \by_j }{ \by_i \cup \by_j } & \text{if }  \by_i \cup \by_j \neq 0 \\
\frac{p_ip_j}{p_i+p_j-p_ip_j} & \text{otherwise.}
\end{cases}
\end{equation}

Following the definition of Jaccard/Tanimoto similarity coefficient in Eq. \eqref{def:tanimoto}, we derive its expected value $\mathbb E[T(\by_i,\by_j)] =\frac{p_i p_j}{p_i + p_j - p_i p_j}$. Substantial deviation from the expected value signifies similarity. Note that the Jaccard/Tanimoto coefficient can also be defined in terms of a multinomial distribution with four categories and $m$ trials (for example, representing $m$ biogeographic units). Four categories arising from presence-absence data are $N_1=\by_{i} \cap \by_{j}$,
$N_2= \by_{i} \cap(1- \by_{j})$, $N_3= (1-\by_{i}) \cap \by_{j}$ and
$N_4=m -N_1-N_2-N_3$. From $p_i$ and $p_j$, probabilities of those four categories are $p_ip_j$, $p_i(1-p_j)$, $(1-p_i)p_j$ and $(1-p_i)(1-p_j)$, respectively. Putting them together, $\boldsymbol N=(N_1,N_2,N_3,N_4)$ is distributed according to a multinomial distribution, $\textnormal{Multi}(m, p_ip_j, p_i(1-p_j), (1-p_i)p_j, (1-p_i)(1-p_j))$.

\begin{prop}\label{prop:expectation}If $\by_i$ and $\by_j$ are independent, then
\[\mathbb{E}(T(\by_i,\by_j)) =\frac{
p_i p_j}{p_i + p_j - p_i p_j}.\]
\end{prop}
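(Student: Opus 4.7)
The plan is to use the law of total expectation with respect to the event $A = \{\by_i \cup \by_j \neq 0\}$, which is precisely the event on which the first branch of the definition in \eqref{def:tanimoto} applies. Writing the multinomial quadruple $\boldsymbol N = (N_1, N_2, N_3, N_4)$ introduced above, we have $A = \{N_1 + N_2 + N_3 \geq 1\}$ and $A^c = \{N_4 = m\}$. Because the entries of $\by_i$ and $\by_j$ are i.i.d.\ Bernoulli,
\[
\Pr(A^c) = \bigl((1-p_i)(1-p_j)\bigr)^m,
\]
and on $A^c$ the definition already sets $T = q$, where I abbreviate $q = p_i p_j / (p_i + p_j - p_i p_j)$.

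The heart of the argument is then to show $\mathbb{E}[T \mid A] = q$ as well, because once that holds the total expectation collapses to $q \Pr(A) + q \Pr(A^c) = q$. On $A$ we have $T = N_1 / (N_1 + N_2 + N_3)$, so I would condition further on $S := N_1 + N_2 + N_3$. By the standard aggregation property of the multinomial distribution, conditional on $S = k$ with $k \geq 1$, $N_1$ is binomial with $k$ trials and success probability
\[
\frac{p_i p_j}{p_i p_j + p_i(1-p_j) + (1-p_i)p_j} \;=\; \frac{p_i p_j}{p_i + p_j - p_i p_j} \;=\; q.
\]
Hence $\mathbb{E}[N_1/S \mid S = k] = q$ for every $k \geq 1$, and averaging over $S$ (restricted to $A$) yields $\mathbb{E}[T \mid A] = q$.

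The only step with any subtlety is the conditional-multinomial calculation; everything else is bookkeeping. It is worth noting that the definition on $A^c$ is chosen exactly so that the two branches contribute the same value, which is why no stray term of order $((1-p_i)(1-p_j))^m$ appears in the final answer. I would present this as a two-line argument: (i) on $A^c$, $T \equiv q$ by definition, and (ii) on $A$, conditioning on $S$ and invoking the binomial-from-multinomial fact gives the same value $q$, so $\mathbb{E}[T] = q$ unconditionally.
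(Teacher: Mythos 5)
Your proposal is correct and follows essentially the same route as the paper's own proof: both condition on $N_1+N_2+N_3$, use the fact that $N_1$ given that sum is binomial with success probability $p_ip_j/(p_i+p_j-p_ip_j)$, and note that the definition on the all-zero event was chosen to match this value so the total expectation collapses. No substantive differences.
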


\begin{proof} First, we compute conditional expectation given $N_1+N_2+N_3$. We observe that $N_1|N_1+N_2+N_3$ follows $\textnormal{Bernoulli}(N_1+N_2+N_3,\frac{p_ip_j}{p_i+p_j-p_ip_j})$. Hence, on set $N_1+N_2+N_3>0$, we have
\begin{align*} 
\mathbb{E}(T(\by_i,\by_j)|N_1+N_2+N_3) &= \mathbb{E}\left(\frac{N_1}{N_1+N_2+N_3}|N_1+N_2+N_3\right)\\
&= \frac{\mathbb{E}(N_1|N_1+N_2+N_3)}{N_1+N_2+N_3}\\
&= \frac{\frac{p_ip_j}{p_i+p_j-p_ip_j}(N_1+N_2+N_3)}{N_1+N_2+N_3}\\
&= \frac{p_ip_j}{p_i+p_j-p_ip_j}
\end{align*}
and on set $N_1+N_2+N_3=0$, we have
\[\mathbb{E}(T(\by_i,\by_j)|N_1+N_2+N_3)=\frac{p_ip_j}{p_i+p_j-p_ip_j}\]
Therefore,
\begin{align*}\mathbb{E}(T(\by_i,\by_j))&=\mathbb{E}[\mathbb{E}(T(\by_i,\by_j)|N_1+N_2+N_3)]\\
&= \frac{p_ip_j}{p_i+p_j-p_ip_j}\mathbb{P}(N_1+N_2+N_3=0) \\
& \hspace{2em} +\frac{p_ip_j}{p_i+p_j-p_ip_j}\mathbb{P}(N_1+N_2+N_3>0)\\&=
\frac{p_ip_j}{p_i+p_j-p_ip_j}.
\end{align*}
\end{proof}

This allows us to define the centered Jaccard/Tanimoto coefficient as
\begin{equation} \label{def:centeredtanimoto}
T^{c}(\by_i,\by_j) = T(\by_i,\by_j) - \mathbb E[T(\by_i,\by_j)]
\end{equation}
This accounts for expected values, naturally distinguishing negative and positive associations. Generally, we would like to measure the deviation of an observed coefficient from an expected value, instead of simply looking at a magnitude of an observed statistics. Furthermore, this centered coefficient may be scaled by variance in order to span a pre-defined range.

To evaluate whether $\by_i$ and $\by_j$ are independent, a following statistical hypothesis testing is performed:
\begin{align} 
\label{def:hypothesis}
\begin{split}
H_{0} &: T^{c}(\by_i,\by_j) = 0 \\
H_{1} &: T^{c}(\by_i,\by_j) \neq 0.
\end{split}
\end{align}
The null hypothesis $H_{0}$ is that the centered Jaccard/Tanimoto coefficient equals zero. Note that this is equivalent to that the conventional (uncentered) Jaccard/Tanimoto coefficient equals an expected value under independence. Therefore, although we propose and use the centered coefficient, this hypothesis testing is attributed to both uncentered and centered versions. Then, a p-value indicates a probability of observing a coefficient equal to or more extreme than an observed coefficient under the null hypothesis.

\subsection*{Distribution of the Jaccard/Tanimoto Coefficient}
To obtain its p-value, we derive the distribution of Jaccard/Tanimoto coefficient under the null hypothesis. In terms of $\boldsymbol N=(N_1,N_2,N_3,N_4)$, the Jaccard/Tanimoto coefficient can be expressed as
\[
T(\by_i,\by_j)=\begin{cases}
\frac{N_1}{N_1+N_2+N_3} &\text{if } N_1+N_2+N_3>0\\
\frac{p_ip_j}{p_i+p_j-p_ip_j}&\text{otherwise.}
\end{cases}
\]

When $p_i$ and $p_j$ are known, the p-value is given by $\mathbb{P}(K_{T^c})$ where  \begin{equation}\label{def:crtitical region}
K_{T^c}=\left\{(N_1,N_2,N_3,N_4)\colon \left |\frac{N_1}{N_1+N_2+N_3} - \mathbb E [T(y_i,y_j)]\right|\geq |T^c|\right\}\;.
\end{equation}

However, in practice, probabilities $p_i$ and $p_j$ are usually unknown. Therefore, we define the centered Jaccard/Tanimoto coefficient by $\hat T^c = T- \frac{\hat p_i\hat p_j}{\hat p_i +\hat p_j- \hat p_i\hat p_j}$, where $\hat p_i =\frac{\sum{\by_i}}{m}$, $\hat p_j =\frac{\sum{\by_j}}{m}$ are standard estimators of $p_i$ and $p_j$ respectively. Plug-in estimates of $\mathbb E[T(y_i,y_j)]$ into Eq. \eqref{def:crtitical region} will result in conservative behaviors, since we estimate the probabilities on the same sample that we want to perform the test. Then, the estimates of expectation are biased toward the observed value of Jaccard/Tanimoto coefficient. To overcome this bias, we estimate probabilities $p_i$ and $p_j$ for each configuration $(N_1,N_2,N_3,N_4)$ separately. 

So in this case, the critical region is defined as follows
\begin{equation}
\label{def:crtitical_region_corrected}
 K_{\hat T^c}=\left\{(N_1,N_2,N_3,N_4)\colon \left |\frac{N_1}{N_1+N_2+N_3} -\frac{\tilde p_i\tilde p_j}{\tilde p_i+\tilde p_j -\tilde p_i\tilde p_j}\right|\geq |\hat T^c|\right\}\;,
\end{equation}
where $\tilde p_i = \frac{N_1+N_2}{m}$ and
$\tilde p_j = \frac{N_1+N_3}{m}$. 

Because the exact distribution is computationally expensive (see \emph{Results} for comparison), we introduce an asymptotic approximation when $m \to \infty$. It may be useful when dealing with very large binary data, where computational power is a bottleneck. Denote by $q_1=p_i p_j$ the probability that both $\by_i$ and $\by_j$ have ones, and by $q_2=p_i+p_j-2p_ip_j$ the probability that only one of two vectors has one. Similarly, $\hat q_1$ and $\hat q_2$ are defined with the plug-in estimators. As $m \to \infty$, we can estimate the variance:

\begin{prop}\label{prop:asymptotic}If $\by_i$ and $\by_j$ are independent then
\[
\sqrt{m}T^c(\by_i,\by_j)\to\mathcal{N}(0,\sigma^2)\]
as $m\to\infty$, where
\[
\sigma^2=\frac{q_1q_2(1-q_2)}{(q_1+q_2)^3}.
\]
\end{prop}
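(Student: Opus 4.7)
The plan is to reduce $T$ to a smooth function of two sample averages of an i.i.d.\ bounded bivariate sequence, and then apply the multivariate central limit theorem together with the delta method. For each trial $k=1,\ldots,m$ I would introduce the indicators $X_k = \mathbf{1}(y_{i,k}=y_{j,k}=1)$ and $Y_k = \mathbf{1}(y_{i,k}=1\text{ or }y_{j,k}=1)$, so that $N_1=\sum_k X_k$ and $M:=N_1+N_2+N_3=\sum_k Y_k$ are partial sums of the i.i.d.\ pairs $(X_k,Y_k)$. The boundary case in the definition of $T$ causes no trouble: $\mathbb{P}(M=0)=(1-q_1-q_2)^m$ decays exponentially in $m$ and so contributes nothing to any weak limit, while on its complement we have the usual $T = N_1/M$.

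The pair $(X_k,Y_k)$ has mean $(q_1, q_1+q_2)$, marginal variances $q_1(1-q_1)$ and $(q_1+q_2)(1-q_1-q_2)$, and covariance $q_1(1-q_1-q_2)$, where the last follows from the identity $X_k Y_k = X_k$. The multivariate CLT then gives
\[
\sqrt{m}\Bigl(\tfrac{N_1}{m}-q_1,\;\tfrac{M}{m}-(q_1+q_2)\Bigr)\;\Longrightarrow\;\mathcal{N}(\bzero,\Sigma),
\]
with $\Sigma$ assembled from those three entries. Applying the delta method to $g(u,v)=u/v$, whose gradient at $(q_1,q_1+q_2)$ is $\bigl(1/(q_1+q_2),\,-q_1/(q_1+q_2)^2\bigr)$, yields that $\sqrt{m}\bigl(T - q_1/(q_1+q_2)\bigr)$ is asymptotically $\mathcal{N}\bigl(0,\,\nabla g^{\nT}\Sigma\,\nabla g\bigr)$. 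By Proposition \ref{prop:expectation} the centering coincides with $\mathbb{E}[T]$, so this is precisely the asymptotic law of $\sqrt{m}\,T^c(\by_i,\by_j)$.

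The main --- and essentially only nontrivial --- obstacle is algebraic: the quadratic form $\nabla g^{\nT}\Sigma\,\nabla g$ must be collapsed into the stated closed form $q_1 q_2(1-q_2)/(q_1+q_2)^3$. After pulling out the common factor $q_1/(q_1+q_2)^3$ the three resulting terms have to be combined and rewritten, conveniently using the identities $q_1+q_2 = p_i+p_j-p_ip_j$ and $1-q_1-q_2 = (1-p_i)(1-p_j)$ to expose the $(1-q_2)$ factor. As an independent cross-check I would also carry out a conditional-variance computation: on $\{M>0\}$, $N_1 \mid M \sim \mathrm{Bin}(M, r)$ with $r = q_1/(q_1+q_2)$, so $\mathrm{Var}(T\mid M) = r(1-r)/M$, and combining this with $M/m \to q_1+q_2$ should deliver the same $\sigma^2$ and guard against sign or bookkeeping errors in the delta-method calculation.
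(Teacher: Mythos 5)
Your strategy is essentially the paper's own: the paper applies the multivariate CLT to $(N_1,\,N_2+N_3)/m$ with the multinomial covariance matrix (off-diagonal entry $-q_1q_2$) and then the delta method with $g(x_1,x_2)=x_1/(x_1+x_2)$, whereas you use the linearly equivalent pair $(N_1,\,N_1+N_2+N_3)/m$ and $g(u,v)=u/v$; the two computations must agree. Your handling of the event $\{N_1+N_2+N_3=0\}$ is in fact more careful than the paper's, which silently ignores it.

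The one place your plan will not go through as written is the final algebraic step, and the obstruction is instructive: the quadratic form does \emph{not} collapse to the stated $q_1q_2(1-q_2)/(q_1+q_2)^3$. Writing $s=q_1+q_2$, your covariance matrix has entries $q_1(1-q_1)$, $q_1(1-s)$, $s(1-s)$ and your gradient is $\bigl(1/s,\,-q_1/s^2\bigr)$, so
\[
\nabla g^{\nT}\Sigma\,\nabla g=\frac{q_1(1-q_1)}{s^2}-\frac{2q_1^2(1-s)}{s^3}+\frac{q_1^2(1-s)}{s^3}=\frac{q_1\bigl(s(1-q_1)-q_1(1-s)\bigr)}{s^3}=\frac{q_1q_2}{s^3}.
\]
The same value comes out of the paper's parametrization (the bracketed sum there is $q_2(1-q_1)+2q_1q_2+q_1(1-q_2)=q_1+q_2$, which cancels one power of $s$), and your own proposed cross-check delivers it immediately: $r(1-r)/s$ with $r=q_1/s$ is exactly $q_1q_2/s^3$. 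So the correct limiting variance is $\sigma^2=q_1q_2/(q_1+q_2)^3$; the extra factor $(1-q_2)$ in the proposition, and in the paper's ``after simplification'' line, appears to be an algebra slip. For instance, with $p_i=p_j=1/2$ the conditional-variance argument gives $8/27$ while the printed formula gives $4/27$. Your derivation is sound --- just do not force the algebra to match the printed target, and trust the cross-check you already planned.
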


\begin{proof}
Theorem 14.6 of \cite{AllofStatistics} states that  
\[\sqrt{m}\left((N_1,N_2+N_3)/m-(q_1,q_2)\right)\to\mathcal{N}(0,\Sigma)\]
where
\[\Sigma=\begin{bmatrix}
q_1(1-q_1)&-q_1q_2\\
-q_1q_2&q_2(1-q_2)
\end{bmatrix}.
\]
Then, we define function $g(x_1,x_2)=\frac{x_1}{x_1+x_2}$ and apply the delta method. So, we get
\[
\sqrt{m}\left(T(\by_i,\by_j)-\frac{q_1}{q_1+q_2}\right)\to\mathcal{N}(0,\nabla g(q_1,q_2)\Sigma\nabla g(q_1,q_2)^T).\]
The gradient of $g$ is
\[\nabla g(x_1,x_2)=\left[\frac{x_2}{(x_1+x_2)^2},\frac{-x_1}{(x_1+x_2)^2}
\right].\]
Finally, after simplification, we obtain
\[\nabla g(q_1,q_2)\Sigma\nabla g(q_1,q_2)^T=\frac{q_1q_2(1-q_2)}{(q_1+q_2)^3}.\]
\end{proof}

In practice, probabilities $p_i$ and $p_j$ are unknown and need to be estimated. Recall that $\hat p_i=\frac{\# \{y_{ik}=1\}}{m}$ and $\hat p_j=\frac{\# \{y_{jk}=1\}}{m}$. We define $\hat q_1$ and $\hat q_2$ by replacing in definition of $q_1$ and $q_2$ true probabilities $p_i$ and $p_j$ by its estimators. So based on Proposition~\ref{prop:asymptotic} we are able to approximate p-values  as follow:
\begin{equation}
2\phi\left(\frac{\sqrt{m}}{\sigma}\left(T(\by_i,\by_j)-\frac{\hat q_1}{\hat q_1+\hat q_2}\right)\right)-1\;,
\end{equation}
where $\phi = \frac{1}{\sqrt{2\pi}} \int_{- \infty}^{x} e^{-x^2/2}dx$ is a standard Gaussian cumulative distribution function (CDF).

\subsection*{Measure Concentration Algorithm}

The distribution of the centered Jaccard/Tanimoto coefficient can be expressed in terms of the multinomial distribution. However, evaluating a significance test based on this representation requires exhaustive computations. It needs summation over all possible states of the multinomial distribution. For the centered Jaccard/Tanimoto coefficient between $y_i$ and $y_j$, we need to compute probability of event $ K_{\hat T^c}$ defined by Eq. \eqref{def:crtitical_region_corrected}. 

This can be quickly and accurately estimated by the measure concentration algorithm (MCA) with a known error bound \cite{Lacki2017}. For every $\varepsilon>0$, we will construct $I_\varepsilon$, a set of $(N_1,N_2,N_3,N_4)$ with $N_1+N_2+N_3+N_4=m$, such that $\mathbb P (N_1,N_2,N_3,N_4)\in I_\varepsilon \geq 1-\varepsilon$. Given the set $I_\varepsilon$, we have following bounds
\[p^L_\varepsilon(\hat T^c) = \mathbb P\left(K_{\hat T^c}\cap I_\varepsilon\right)\leq
\mathbb P\left(
K_{\hat T^c}\right)\leq
\mathbb P\left(
K_{\hat T^c}\cap I_\varepsilon\right)+\varepsilon=
p^U_\varepsilon(\hat T^c).
\]
In addition, $p^U_\varepsilon(\hat T^c)-p^L_\varepsilon(\hat T^c)=\varepsilon$.

The idea behind the algorithm is that a multinomial distribution concentrates around its mode. Two possible states $\boldsymbol N= (N_1,N_2,N_3,N_4)$ and  $\boldsymbol N^\prime=(N_1^\prime,N_2^\prime,N_3^\prime,N_4^\prime)$ are neighbors, $\boldsymbol{N}\sim\boldsymbol{N^\prime}$, if $\sum_{i=1}^4|N_i-N^\prime_i|=2$. This means that $\boldsymbol N^\prime$ can be obtained from $\boldsymbol N$ by moving one element to a different class. We construct the set $I_\varepsilon$ as follows. 

At the onset, $I_\varepsilon$ contains only the mode of multinomial distribution. We find the mode by a simple hill climbing algorithm, which starts with a state close to the mean of the multinomial distribution and follows the direction of increasing probability until the maximum is reached. Because of unimodality, it is indeed a global maximum. In the next steps, we add the neighbors of states which were previously visited. The procedure is repeated until the total probability of set $I_\varepsilon$ reaches the desired value $1-\varepsilon$. The details of the above method can be found in \cite{Lacki2017}. We construct the set $I_\varepsilon$ and we estimate the p-value by
\begin{align} 
 p^L(\hat T^c) = &
 \sum_{\boldsymbol N \in I_\varepsilon} \mathbf{1}\left(\left|\frac{N_1}{N_1+N_2+N_3} - \frac{\tilde p_i\tilde p_j}{\tilde p_i +\tilde p_j-\tilde p_i\tilde p_j}\right|\geq |\hat T^c|\right) \mathbb{P}(N_1,N_2,N_3,N_4).
\end{align} 

\subsection*{Bootstrap Procedure}
The bootstrap procedure has gained mainstream popularity for its wide applicability and statistical treatments \cite{Efron1994}. Creating an empirical distribution of null statistics allows for a flexible and robust estimation of p-values and related statistics. We show how to use the resampling with replacement to obtain statistical significance of $T^{c}(\by_i,\by_j)$. Particularly, resampling with replacement $\by_i$ and $\by_j$, separately, breaks any potential dependency. This allows us to calculate an empirical distribution of Jaccard/Tanimoto coefficients under the null hypothesis:

\

\begin{algorithm}[H]
  \KwIn{two binary vectors $\by_i$ and $\by_j$}
  \KwOut{$p$-$value$}
  
Calculate a centered Jaccard/Tanimoto coefficient $t = T^{c}(\by_i,\by_j)$. \\
\For{$b\leftarrow 1$ \KwTo $B$} {
Resample with replacement $\by_i$ and $\by_j$, resulting in $\by^*_i$ and $\by^*_j$. \\
Calculate bootstrap null coefficients $t^*_b = T^{c}(\by^*_ i,\by^*_j)$.
}
Compute the p-value by $$p\textnormal{-}value = \frac{\mathbf{1} \{ |t^*_b| \geq |t|; b=1, \ldots, B\}}{B}.$$

\caption{Bootstrap Procedure for Jaccard/Tanimoto Coefficients}\label{bootstrap}
\end{algorithm}

\

The expectation of Jaccard/Tanimoto coefficients is estimated directly from resampled vectors $\by^*_i$ and $\by^*_j$, that are effectively independent. Therefore, each iteration provides randomness, which helps avoid a bias related to using an estimated expectation based only on observation. Previously, there are early works in Monte Carlo procedures \cite{Connor1978,Gilpin1982} and published statistical tables for assessing randomness in species co-occurances \cite{Baroni-Urbani1976, Baroni-Urbani1979}. However, earlier works have assumed that a probability of occurrences is 0.5 regardless of species or biogeographic units. Permutation methods based on conventional uncentered coefficients are available in R packages, whose operating characteristics are not described in details \cite{Gotelli2015, Oksanen2017}.

The resolution of the empirical null distribution depends on $B$, where the larger $B$ will result in more precise estimation of p-values. Although the choice of $B$ would likely be dictated by $n$ and $m$, as well as available computational power, we recommend setting $B$ to at least 5-10 times of $m$. In our simulation studies, the total bootstrap iterations is set to $B = 5 \times m$, which are shown to be both accurate and fast. When comparing a very large set of species or OTUs, it may be helpful to pool null statistics to increase the p-value resolution and speed up the computation.

\section*{Results and discussion}
\subsection*{Simulation Studies}

We have developed statistical methods and algorithms to obtain statistical significance of Jaccard/Tanimoto similarity coefficients for biological presence-absence data. Beyond deriving the exact solution, we introduce the measurement concentration algorithm (MCA) and bootstrap method. We characterize their operating characteristics by comprehensive simulation studies where a wide range of parameters for presence-absence datasets are considered. Our goal is to maintain theoretically correct behaviors of p-values. Null p-values corresponding to $H_0$ are evaluated against a Uniform(0,1) distribution. False discovery rates (FDRs) are directly estimated from p-values produced by our methods to demonstrate an overall error control.

First, we conducted 5 simulation scenarios using different underlying occurrence probabilities $p= 0.1, 0.3, 0.5, 0.7, 0.9$ to generate independent presence-absence datasets. In essence, they are two species of length $m=100$ that exhibit unrelated co-occurrence patterns, where a proportion of presence (1's) ranges from $10\%$ to $90\%$. For each of simulation scenarios, a total of $2000$ comparisons were made using a length $m=100$. Without any information about simulation parameters, our proposed methods are applied on an identically simulated dataset (Figure 2). Theoretically correct p-values under the null hypothesis (null p-values) should form a Uniform distribution between 0 and 1, which are denoted by dashed diagonal lines in QQ plots. An upward deviation from diagonals shows an anti-conservative bias, as shown among some asymptotic p-values. In all scenarios, p-values from the exact solution, bootstrap ($B=500$), and measure concentration (accuracy $=1 \times 10^{-5}$) algorithms follow a theoretically correct Uniform(0,1) distribution (Figure 2). Asymptotic approximation is inconsistent; its behavior is anti-conservative with $p=0.3, 0.5$ and slightly conservative with $p=0.7, 0.9$. Asymptotic approximation should only be used when computational time is a critical bottleneck.

Second, we generated a mixture of independent and dependent datasets out of $n=2000$ presence-absence vectors (of $n=2000$ species observed in $m=200$ biogeographic units) to evaluate false discovery rates. In three separate scenarios, we simulated $25\%$, $50\%$, and $75\%$ of $n=2000$ species to be independent, resulting in null proportions of $\pi_0 = .25, .50, .75$ respectively. For example, a scenario with $\pi_0 = .75$ produces 500 out of $n=2000$ presence-absence variables that are truly associated with the query variable. Then, our proposed asymptotic approximation, bootstrap method, and measure concentration algorithm (MCA) are used to automatically compute p-values. To account for variation in simulation, we repeated each scenario 20 times. False discovery rates (FDRs) and $\pi_0$ are estimated by the q-value methodology \cite{storey2003statistical}. Q-values are evaluated against FDR thresholds, so that we can evaluate accuracy of observed FDRs (Figure 3). Twenty simulation replications are shown in semi-transparent shades, whereas their group averages for 3 methods are shown as solid lines. An upward deviation as shown by asymptotic approximation indicates an overall anti-conservative behavior, likely due to $m \not\to \infty$. The bootstrap and MCA maintain the overall error rates, where the bootstrap exhibits slightly conservative characteristics (Figure 3).

Third, we compared the computational efficiency of our proposed methods using our {\tt jaccard} package on RStudio Cloud (Intel Xeon 2.90GHz and 1GB RAM), with R 3.5.0. We measured the runtime for a range of lengths $m= 50, \ldots, 500$. For each $m$, we applied the proposed methods 10 times, with the bootstrap iteration $B=5 \times m$ and MCA accuracy of $1 \times 10^{-5}$. The average runtimes are shown Figure 4. Our proposed computational methods show drastic improvement over the exact solution as $m$ increases. The asymptotic approximation is mostly instantaneous. When the similarity between two presence-absence vectors of length $m=500$ were tested using the {\tt jaccard} package, the exact solution was prohibitively slow, taking 41.5s on average. The bootstrap method was 449.8 times (0.09s) faster, whereas MCA was 92.5 times (0.45s) faster than the exact solution. Furthermore, we compared the runtimes of estimation methods for $m = 1000, \ldots, 10000$ (Figure S1). The gain in computational efficiency is more pronounced as the dimension (i.e., a length of presence-absence vectors) grows in size.

Last, a simulation study with $p= 0.5$ and $m=200$ was used to evaluate two recent methods of species co-occurrences analysis. We generated independent presence-absence data where two species are truly unrelated. Then, methods of combinatorics \cite{Veech2013} and hypergeometric distributions \cite{Griffith2016} are applied to obtain p-values. We followed the recommendations given in each paper, displaying four possible p-values from \cite{Veech2013} (Figure S2) and two one-sided p-values from \cite{Griffith2016} (Figure S3). We observe these p-values under the null hypothesis to substantially deviate from theoretically correct Uniform(0,1) distributions.

\subsection*{Applications in Species Co-occurrences}
To show applications in statistically testing biological presence-absence data, the proposed methods are applied to species co-occurrence data. We investigated bird species on 28 islands in the Republic of Vanuatu, that are available in \cite{Manly2006} and analyzed in several pioneering studies in non-random co-occurrences of species \cite{Connor1979, Gilpin1982, Wilson1987, Manly1995}. The data is consisted of presence and absence of bird species in 28 islands of Vanuatu, which used to be known as the New Hebrides. Three generalist species that existed in all 28 islands were removed from our analysis. We are interested in identifying what pairs of species exhibit statistically significantly co-occurrences. 

For $n=53$ bird species in $m=28$ islands, we obtained 1378 pair-wise Jaccard/Tanimoto similarity coefficients. The conventional Jaccard/Tanimoto coefficients depends strongly on their expected values under independence (Figure 5). Similarly, the conventional Jaccard/Tanimoto coefficients are substantially correlated with the proportion of occurrences, with a Pearson correlation of $0.43$ (p-value $< 2.2 \times 10^{-16}$). Relying only on similarity coefficients would miss non-random co-occurrences among bird species that live in a few islands (Figure S4). Our proposed methods account for co-occurrences that would be expected under independence. Histograms of the uncentered and centered Jaccard/Tanimoto coefficients are compared in Figure S5.

We computed statistical significance by applying the bootstrap method with $B=5000$ and MCA with accuracy of $1 \times 10^{-5}$. Our two computational approaches estimated p-values that are almost identical with their mean squared deviation of $1.15 \times 10^{-4}$ (Figure S6). Significant results that are substantially deviating from random samples indicate non-random co-occurrences of species (Figure 6). Out of 1378 pairs of species that were tested, the proportion of independent specie pairs was estimated to be $24\%$ using q-value methodology \cite{storey2003statistical}. Then, we calculated false discovery rates (FDRs) from 1378 pair-wise p-values. We discovered that 374 ($27\%$) pairs are deemed significant at a q-value threshold of 0.10.

Additionally, we applied the Jaccard/Tanimoto similarity tests among fish species in French freshwater streams, surveyed over a long period of time \cite{Comte2016}. Briefly, the presence and absence data of the $n=32$ most common fish species in $m=3347$ sites across French rivers are obtained during 1980 - 1991 \cite{Comte2016}. Our analysis estimates that about $84.3\%$ of 496 pairs are estimated to be non-randomly co-occurring. As surveyed for over a decade across Fresh rivers and surrounding habitats, it is reasonable that many fish species are interacting or influenced by related climate conditions. There are 21 pairs of species with q-values $> 0.1$ (corresponding p-values ranging from 0.637 and 0.969). For example, the centered statistics between \emph{Pungitius pungitius} and \emph{Cyprinus carpio} is $3.31 \times 10^{-4}$, whereas that between \emph{Pungitius pungitius} and \emph{Lota lota} is $-4.40 \times 10^{-4}$. \emph{P. pungitius} is a small fish species typically riding in thick submerged vegetation with the breeding season falling in April - July. \emph{C. carpio} and \emph{L. lota} are much bigger species and generally prefers a large body of water.

\section*{Conclusion}
From biogeography to microbiology, evaluating similarity among species and biogeographic units is fundamental to assessing co-existence and biodiversity. Having observed occurrences of species in multiple biogeographic units, one of the primary goals in analyzing presence-absence data is to identify non-random co-occurrences. Even if two species would be present independently of each other, they may occur together by chance. For the last 30 years, the Jaccard/Tanimoto coefficient has been shown to be highly useful for quantitative analysis of co-occurrences that help inform systematic relationship among species \cite{Birks1987, Jackson1992, RealVargas1996}. We have developed a rigorous statistical framework and methods to efficiently calculate statistical significance of such similarity and to identify non-random co-occurrences.

For testing co-occurrences using the Jaccard/Tanimoto coefficient, we introduce exact and asymptotic solutions, as well as bootstrap and measure concentration algorithm. The proposed suite of statistical methods can provide a rigorous guideline to identify related species. Through comprehensive simulation studies, we characterized their operating characteristics using p-values and FDRs. The proposed bootstrap and measure concentration algorithms are highly accurate and efficient, providing orders of magnitude improvement in a computational speed. We have implemented the proposed methods in an open source R package and a Shiny web app. A user can upload a dataset to be analyzed, and create histograms and heat maps automatically. This will facilitate adaptation of p-values, FDRs, and related quantities in analyzing species co-occurrences.

Beyond species co-occurrences, the Jaccard/Tanimoto coefficient is used in diverse areas of biological science where binary data are observed and compared. When molecules and reactions are represented as hashed fingerprints, it is used for quantitative comparisons and classifications \cite{Todeschini2012,Rahman2014,Bajusz2015}. Similarity between biochemical reactions can be tested by applying our methods on their corresponding fingerprints. In genomics, the standard tools such as BEDTools \cite{Quinlan2014} evaluate genomic intervals using the Jaccard/Tanimoto coefficients. Given genomic intervals from two samples or groups, one could test whether their overlap is statistically significant, providing evidences for shared genomic variations. Due to the popularity of Jaccard/Tanimoto coefficients, the proposed suite of methods would be useful in a broad range of scientific applications.

\begin{backmatter}
\bibliographystyle{bmc-mathphys} 
\bibliography{refs}   


\begin{thebibliography}{38}
\ifx \bisbn   \undefined \def \bisbn  #1{ISBN #1}\fi
\ifx \binits  \undefined \def \binits#1{#1}\fi
\ifx \bauthor  \undefined \def \bauthor#1{#1}\fi
\ifx \batitle  \undefined \def \batitle#1{#1}\fi
\ifx \bjtitle  \undefined \def \bjtitle#1{#1}\fi
\ifx \bvolume  \undefined \def \bvolume#1{\textbf{#1}}\fi
\ifx \byear  \undefined \def \byear#1{#1}\fi
\ifx \bissue  \undefined \def \bissue#1{#1}\fi
\ifx \bfpage  \undefined \def \bfpage#1{#1}\fi
\ifx \blpage  \undefined \def \blpage #1{#1}\fi
\ifx \burl  \undefined \def \burl#1{\textsf{#1}}\fi
\ifx \doiurl  \undefined \def \doiurl#1{\textsf{#1}}\fi
\ifx \betal  \undefined \def \betal{\textit{et al.}}\fi
\ifx \binstitute  \undefined \def \binstitute#1{#1}\fi
\ifx \binstitutionaled  \undefined \def \binstitutionaled#1{#1}\fi
\ifx \bctitle  \undefined \def \bctitle#1{#1}\fi
\ifx \beditor  \undefined \def \beditor#1{#1}\fi
\ifx \bpublisher  \undefined \def \bpublisher#1{#1}\fi
\ifx \bbtitle  \undefined \def \bbtitle#1{#1}\fi
\ifx \bedition  \undefined \def \bedition#1{#1}\fi
\ifx \bseriesno  \undefined \def \bseriesno#1{#1}\fi
\ifx \blocation  \undefined \def \blocation#1{#1}\fi
\ifx \bsertitle  \undefined \def \bsertitle#1{#1}\fi
\ifx \bsnm \undefined \def \bsnm#1{#1}\fi
\ifx \bsuffix \undefined \def \bsuffix#1{#1}\fi
\ifx \bparticle \undefined \def \bparticle#1{#1}\fi
\ifx \barticle \undefined \def \barticle#1{#1}\fi
\ifx \bconfdate \undefined \def \bconfdate #1{#1}\fi
\ifx \botherref \undefined \def \botherref #1{#1}\fi
\ifx \url \undefined \def \url#1{\textsf{#1}}\fi
\ifx \bchapter \undefined \def \bchapter#1{#1}\fi
\ifx \bbook \undefined \def \bbook#1{#1}\fi
\ifx \bcomment \undefined \def \bcomment#1{#1}\fi
\ifx \oauthor \undefined \def \oauthor#1{#1}\fi
\ifx \citeauthoryear \undefined \def \citeauthoryear#1{#1}\fi
\ifx \endbibitem  \undefined \def \endbibitem {}\fi
\ifx \bconflocation  \undefined \def \bconflocation#1{#1}\fi
\ifx \arxivurl  \undefined \def \arxivurl#1{\textsf{#1}}\fi
\csname PreBibitemsHook\endcsname

\bibitem{Jaccard1912}
\begin{barticle}
\bauthor{\bsnm{Jaccard}, \binits{P.}}:
\batitle{The distribution of the flora in the alpine zone}.
\bjtitle{New Phytologist}
\bvolume{11}(\bissue{2}),
\bfpage{37}--\blpage{50}
(\byear{1912}).
doi:\doiurl{10.1111/j.1469-8137.1912.tb05611.x}
\end{barticle}
\endbibitem

\bibitem{Tanimoto1958}
\begin{botherref}
\oauthor{\bsnm{Tanimoto}, \binits{T.}}:
An elementary mathematical theory of classification and prediction.
Technical report,
International Business Machines Corporation,
New York
(1958)
\end{botherref}
\endbibitem

\bibitem{Birks1987}
\begin{barticle}
\bauthor{\bsnm{Birks}, \binits{H.J.B.}}:
\batitle{Recent methodological developments in quantitative descriptive
  biogeography}.
\bjtitle{Ann. Zool. Fenn.}
\bvolume{24},
\bfpage{165}--\blpage{178}
(\byear{1987})
\end{barticle}
\endbibitem

\bibitem{Jackson1992}
\begin{barticle}
\bauthor{\bsnm{Jackson}, \binits{D.A.}},
\bauthor{\bsnm{Somers}, \binits{K.M.}},
\bauthor{\bsnm{Harvey}, \binits{H.H.}}:
\batitle{Null models and fish communities: Evidence of nonrandom patterns}.
\bjtitle{The American Naturalist}
\bvolume{139}(\bissue{5}),
\bfpage{930}--\blpage{951}
(\byear{1992})
\end{barticle}
\endbibitem

\bibitem{RealVargas1996}
\begin{barticle}
\bauthor{\bsnm{Real}, \binits{R.}},
\bauthor{\bsnm{Vargas}, \binits{J.M.}}:
\batitle{The probabilistic basis of jaccard's index of similarity}.
\bjtitle{Systematic Biology}
\bvolume{45}(\bissue{3}),
\bfpage{380}--\blpage{385}
(\byear{1996}).
doi:\doiurl{10.1093/sysbio/45.3.380}
\end{barticle}
\endbibitem

\bibitem{Manly2006}
\begin{bbook}
\bauthor{\bsnm{Manly}, \binits{B.F.J.}}:
\bbtitle{Randomization, Bootstrap and Monte Carlo Methods in Biology}.
\bpublisher{Chapman \& Hall / CRC Press},
\blocation{Boca Raton, FL}
(\byear{2006})
\end{bbook}
\endbibitem

\bibitem{Davies1993}
\begin{bbook}
\bauthor{\bsnm{Davies}, \binits{N.B.}},
\bauthor{\bsnm{Krebs}, \binits{J.R.}}:
\bbtitle{An Introduction to Behavioural Ecology}.
\bpublisher{Wiley-Blackwell},
\blocation{U.S.A.}
(\byear{1993})
\end{bbook}
\endbibitem

\bibitem{Townsend2002}
\begin{bbook}
\bauthor{\bsnm{Townsend}, \binits{C.R.}},
\bauthor{\bsnm{Begon}, \binits{M.}},
\bauthor{\bsnm{Harper}, \binits{J.L.}}:
\bbtitle{Essentials of Ecology}.
\bpublisher{Wiley-Blackwell},
\blocation{U.S.A.}
(\byear{2002})
\end{bbook}
\endbibitem

\bibitem{Whittaker1960}
\begin{barticle}
\bauthor{\bsnm{Whittaker}, \binits{R.H.}}:
\batitle{Vegetation of the siskiyou mountains, oregon and california}.
\bjtitle{Ecological Monographs}
\bvolume{30}(\bissue{3}),
\bfpage{279}--\blpage{338}
(\byear{1960}).
doi:\doiurl{10.2307/1943563}
\end{barticle}
\endbibitem

\bibitem{Harrison1992}
\begin{barticle}
\bauthor{\bsnm{Harrison}, \binits{S.}},
\bauthor{\bsnm{Ross}, \binits{S.J.}},
\bauthor{\bsnm{Lawton}, \binits{J.H.}}:
\batitle{Beta diversity on geographic gradients in britain}.
\bjtitle{The Journal of Animal Ecology}
\bvolume{61}(\bissue{1}),
\bfpage{151}
(\byear{1992}).
doi:\doiurl{10.2307/5518}
\end{barticle}
\endbibitem

\bibitem{Koleff2003}
\begin{barticle}
\bauthor{\bsnm{Koleff}, \binits{P.}},
\bauthor{\bsnm{Gaston}, \binits{K.J.}},
\bauthor{\bsnm{Lennon}, \binits{J.J.}}:
\batitle{Measuring beta diversity for presence-absence data}.
\bjtitle{Journal of Animal Ecology}
\bvolume{72}(\bissue{3}),
\bfpage{367}--\blpage{382}
(\byear{2003}).
doi:\doiurl{10.1046/j.1365-2656.2003.00710.x}
\end{barticle}
\endbibitem

\bibitem{Connor1979}
\begin{barticle}
\bauthor{\bsnm{Connor}, \binits{E.F.}},
\bauthor{\bsnm{Simberloff}, \binits{D.}}:
\batitle{The assembly of species communities: Chance or competition?}
\bjtitle{Ecology}
\bvolume{60}(\bissue{6}),
\bfpage{1132}
(\byear{1979}).
doi:\doiurl{10.2307/1936961}
\end{barticle}
\endbibitem

\bibitem{Diamond1982}
\begin{barticle}
\bauthor{\bsnm{Diamond}, \binits{J.M.}},
\bauthor{\bsnm{Gilpin}, \binits{M.E.}}:
\batitle{Examination of the "null" model of connor and simberloff for species
  co-occurrence on islands}.
\bjtitle{Oecologia}
\bvolume{52},
\bfpage{64}--\blpage{74}
(\byear{1982}).
doi:\doiurl{10.1007/BF00349013}
\end{barticle}
\endbibitem

\bibitem{Gilpin1982}
\begin{barticle}
\bauthor{\bsnm{Gilpin}, \binits{M.E.}},
\bauthor{\bsnm{Diamond}, \binits{J.M.}}:
\batitle{Factors contributing to non-randomness in species co-occurrences on
  islands}.
\bjtitle{Oecologia}
\bvolume{52},
\bfpage{75}--\blpage{84}
(\byear{1982}).
doi:\doiurl{10.1007/BF00349014}
\end{barticle}
\endbibitem

\bibitem{Wilson1987}
\begin{barticle}
\bauthor{\bsnm{Wilson}, \binits{J.B.}}:
\batitle{Methods for detecting non-randomness in species co-occurrences: a
  contribution}.
\bjtitle{Oecologia}
\bvolume{73}(\bissue{4}),
\bfpage{579}--\blpage{582}
(\byear{1987}).
doi:\doiurl{10.1007/BF00379419}
\end{barticle}
\endbibitem

\bibitem{Manly1995}
\begin{barticle}
\bauthor{\bsnm{Manly}, \binits{B.F.J.}}:
\batitle{A note on the analysis of species co-occurrences}.
\bjtitle{Ecology}
\bvolume{76}(\bissue{4}),
\bfpage{1109}--\blpage{1115}
(\byear{1995}).
doi:\doiurl{10.2307/1940919}
\end{barticle}
\endbibitem

\bibitem{Sanderson1998}
\begin{barticle}
\bauthor{\bsnm{Sanderson}, \binits{J.}},
\bauthor{\bsnm{Moulton}, \binits{M.}},
\bauthor{\bsnm{Selfridge}, \binits{R.}}:
\batitle{Null matrices and the analysis of species co-occurrencessanderson}.
\bjtitle{Oecologia}
\bvolume{116}(\bissue{1--2}),
\bfpage{275}--\blpage{283}
(\byear{1998}).
doi:\doiurl{10.1007/s004420050}
\end{barticle}
\endbibitem

\bibitem{Ellwood2009}
\begin{barticle}
\bauthor{\bsnm{Ellwood}, \binits{M.D.F.}},
\bauthor{\bsnm{Manica}, \binits{A.}},
\bauthor{\bsnm{Foster}, \binits{W.A.}}:
\batitle{Stochastic and deterministic processes jointly structure tropical
  arthropod communities}.
\bjtitle{Ecology Letters}
\bvolume{12}(\bissue{4}),
\bfpage{277}--\blpage{284}
(\byear{2009}).
doi:\doiurl{10.1111/j.1461-0248.2009.01284.x}
\end{barticle}
\endbibitem

\bibitem{Chase2011}
\begin{barticle}
\bauthor{\bsnm{Chase}, \binits{J.M.}},
\bauthor{\bsnm{Myers}, \binits{J.A.}}:
\batitle{Disentangling the importance of ecological niches from stochastic
  processes across scales}.
\bjtitle{Philosophical transactions of the Royal Society B: Biological
  sciences}
\bvolume{366}(\bissue{1576}),
\bfpage{2351}--\blpage{2363}
(\byear{2011}).
doi:\doiurl{10.1098/rstb.2011.0063}
\end{barticle}
\endbibitem

\bibitem{Fridley2007}
\begin{barticle}
\bauthor{\bsnm{Fridley}, \binits{J.D.}},
\bauthor{\bsnm{Vandermast}, \binits{D.B.}},
\bauthor{\bsnm{Kuppinger}, \binits{D.M.}},
\bauthor{\bsnm{Manthey}, \binits{M.}},
\bauthor{\bsnm{Peet}, \binits{R.K.}}:
\batitle{Co-occurrence based assessment of habitat generalists and specialists:
  A new approach for the measurement of niche width}.
\bjtitle{Journal of Ecology}
\bvolume{95}(\bissue{4}),
\bfpage{707}--\blpage{722}
(\byear{2007}).
doi:\doiurl{10.1111/j.1365-2745.2007.01236.x}
\end{barticle}
\endbibitem

\bibitem{Araujo2013}
\begin{barticle}
\bauthor{\bsnm{Ara{\'{u}}jo}, \binits{M.B.}},
\bauthor{\bsnm{Rozenfeld}, \binits{A.}}:
\batitle{The geographic scaling of biotic interactions}.
\bjtitle{Ecography}
(\byear{2013}).
doi:\doiurl{10.1111/j.1600-0587.2013.00643.x}
\end{barticle}
\endbibitem

\bibitem{Baroni-Urbani1976}
\begin{barticle}
\bauthor{\bsnm{Baroni-Urbani}, \binits{C.}},
\bauthor{\bsnm{Buser}, \binits{M.W.}}:
\batitle{Similarity of binary data}.
\bjtitle{Systematic Zoology}
\bvolume{25}(\bissue{3}),
\bfpage{251}
(\byear{1976}).
doi:\doiurl{10.2307/2412493}
\end{barticle}
\endbibitem

\bibitem{Baroni-Urbani1979}
\begin{barticle}
\bauthor{\bsnm{Baroni-Urbani}, \binits{C.}}:
\batitle{A statistical table for the degree of coexistence between two
  species}.
\bjtitle{Oecologia}
\bvolume{44}(\bissue{3}),
\bfpage{287}--\blpage{289}
(\byear{1979}).
doi:\doiurl{10.1007/bf00545229}
\end{barticle}
\endbibitem

\bibitem{Veech2013}
\begin{barticle}
\bauthor{\bsnm{Veech}, \binits{J.A.}}:
\batitle{A probabilistic model for analysing species co-occurrence}.
\bjtitle{Global Ecology and Biogeography}
\bvolume{22},
\bfpage{252}--\blpage{260}
(\byear{2013}).
doi:\doiurl{10.1111/j.1466-8238.2012.00789.x}
\end{barticle}
\endbibitem

\bibitem{Griffith2016}
\begin{botherref}
\oauthor{\bsnm{Griffith}, \binits{D.M.}},
\oauthor{\bsnm{Veech}, \binits{J.A.}},
\oauthor{\bsnm{Marsh}, \binits{C.J.}}:
cooccur: Probabilistic species co-occurrence analysis inr.
Journal of Statistical Software
\textbf{69}
(2016).
doi:\doiurl{10.18637/jss.v069.c02}
\end{botherref}
\endbibitem

\bibitem{R}
\begin{bbook}
\bauthor{\bsnm{{R Core Team}}}:
\bbtitle{R: A Language and Environment for Statistical Computing}.
\bpublisher{R Foundation for Statistical Computing},
\blocation{Vienna, Austria}
(\byear{2017}).
\bcomment{R Foundation for Statistical Computing}.
\burl{https://www.R-project.org}
\end{bbook}
\endbibitem

\bibitem{AllofStatistics}
\begin{bbook}
\bauthor{\bsnm{Wasserman}, \binits{L.}}:
\bbtitle{All of Statistics: A Concise Course in Statistical Inference}.
\bpublisher{Springer},
\blocation{New York, U.S.A.}
(\byear{2010})
\end{bbook}
\endbibitem

\bibitem{Lacki2017}
\begin{barticle}
\bauthor{\bsnm{{\L}{\k{a}}cki}, \binits{M.K.}},
\bauthor{\bsnm{Startek}, \binits{M.}},
\bauthor{\bsnm{Valkenborg}, \binits{D.}},
\bauthor{\bsnm{Gambin}, \binits{A.}}:
\batitle{{IsoSpec}: Hyperfast fine structure calculator}.
\bjtitle{Analytical Chemistry}
\bvolume{89}(\bissue{6}),
\bfpage{3272}--\blpage{3277}
(\byear{2017}).
doi:\doiurl{10.1021/acs.analchem.6b01459}
\end{barticle}
\endbibitem

\bibitem{Efron1994}
\begin{bbook}
\bauthor{\bsnm{Efron}, \binits{B.}},
\bauthor{\bsnm{Tibshirani}, \binits{R.}}:
\bbtitle{An Introduction to the Bootstrap}.
\bpublisher{Chapman \& Hall / CRC Press},
\blocation{Boca Raton, Florida}
(\byear{1994})
\end{bbook}
\endbibitem

\bibitem{Connor1978}
\begin{barticle}
\bauthor{\bsnm{Connor}, \binits{E.F.}},
\bauthor{\bsnm{Simberloff}, \binits{D.}}:
\batitle{Species number and compositional similarity of the galapagos flora and
  avifauna}.
\bjtitle{Ecological Monographs}
\bvolume{48},
\bfpage{219}--\blpage{248}
(\byear{1978}).
doi:\doiurl{10.2307/2937300}
\end{barticle}
\endbibitem

\bibitem{Gotelli2015}
\begin{botherref}
\oauthor{\bsnm{Gotelli}, \binits{N.J.}},
\oauthor{\bsnm{Hart}, \binits{E.M.}},
\oauthor{\bsnm{Ellison}, \binits{A.M.}}:
EcoSimR: Null Model Analysis for Ecological Data.
(2015).
R package version 0.1.0.
\url{http://github.com/gotellilab/EcoSimR}
\end{botherref}
\endbibitem

\bibitem{Oksanen2017}
\begin{botherref}
\oauthor{\bsnm{Oksanen}, \binits{J.}},
\oauthor{\bsnm{Blanchet}, \binits{F.G.}},
\oauthor{\bsnm{Friendly}, \binits{M.}},
\oauthor{\bsnm{Kindt}, \binits{R.}},
\oauthor{\bsnm{Legendre}, \binits{P.}},
\oauthor{\bsnm{McGlinn}, \binits{D.}},
\oauthor{\bsnm{Minchin}, \binits{P.R.}},
\oauthor{\bsnm{O'Hara}, \binits{R.B.}},
\oauthor{\bsnm{Simpson}, \binits{G.L.}},
\oauthor{\bsnm{Solymos}, \binits{P.}},
\oauthor{\bsnm{Stevens}, \binits{M.H.H.}},
\oauthor{\bsnm{Szoecs}, \binits{E.}},
\oauthor{\bsnm{Wagner}, \binits{H.}}:
Vegan: Community Ecology Package.
(2017).
R package version 2.4-5.
\url{https://CRAN.R-project.org/package=vegan}
\end{botherref}
\endbibitem

\bibitem{storey2003statistical}
\begin{barticle}
\bauthor{\bsnm{Storey}, \binits{J.D.}},
\bauthor{\bsnm{Tibshirani}, \binits{R.}}:
\batitle{Statistical significance for genomewide studies}.
\bjtitle{Proceedings of the National Academy of Sciences}
\bvolume{100}(\bissue{16}),
\bfpage{9440}--\blpage{9445}
(\byear{2003}).
doi:\doiurl{10.1073/pnas.1530509100}
\end{barticle}
\endbibitem

\bibitem{Comte2016}
\begin{barticle}
\bauthor{\bsnm{Comte}, \binits{L.}},
\bauthor{\bsnm{Hugueny}, \binits{B.}},
\bauthor{\bsnm{Grenouillet}, \binits{G.}}:
\batitle{Climate interacts with anthropogenic drivers to determine extirpation
  dynamics}.
\bjtitle{Ecography}
\bvolume{39}(\bissue{10}),
\bfpage{1008}--\blpage{1016}
(\byear{2016}).
doi:\doiurl{10.1111/ecog.01871}
\end{barticle}
\endbibitem

\bibitem{Todeschini2012}
\begin{barticle}
\bauthor{\bsnm{Todeschini}, \binits{R.}},
\bauthor{\bsnm{Consonni}, \binits{V.}},
\bauthor{\bsnm{Xiang}, \binits{H.}},
\bauthor{\bsnm{Holliday}, \binits{J.}},
\bauthor{\bsnm{Buscema}, \binits{M.}},
\bauthor{\bsnm{Willett}, \binits{P.}}:
\batitle{Similarity coefficients for binary chemoinformatics data: Overview and
  extended comparison using simulated and real data sets}.
\bjtitle{J. Chem. Inf. Model.}
\bvolume{52}(\bissue{11}),
\bfpage{2884}--\blpage{2901}
(\byear{2012}).
doi:\doiurl{10.1021/ci300261r}
\end{barticle}
\endbibitem

\bibitem{Rahman2014}
\begin{barticle}
\bauthor{\bsnm{Rahman}, \binits{S.A.}},
\bauthor{\bsnm{Cuesta}, \binits{S.M.}},
\bauthor{\bsnm{Furnham}, \binits{N.}},
\bauthor{\bsnm{Holliday}, \binits{G.L.}},
\bauthor{\bsnm{Thornton}, \binits{J.M.}}:
\batitle{{EC}-{BLAST}: a tool to automatically search and compare enzyme
  reactions}.
\bjtitle{Nature Methods}
\bvolume{11}(\bissue{2}),
\bfpage{171}--\blpage{174}
(\byear{2014}).
doi:\doiurl{10.1038/nmeth.2803}
\end{barticle}
\endbibitem

\bibitem{Bajusz2015}
\begin{botherref}
\oauthor{\bsnm{Bajusz}, \binits{D.}},
\oauthor{\bsnm{R{\'{a}}cz}, \binits{A.}},
\oauthor{\bsnm{H{\'{e}}berger}, \binits{K.}}:
Why is tanimoto index an appropriate choice for fingerprint-based similarity
  calculations?
J Cheminform
\textbf{7}(1)
(2015).
doi:\doiurl{10.1186/s13321-015-0069-3}
\end{botherref}
\endbibitem

\bibitem{Quinlan2014}
\begin{botherref}
\oauthor{\bsnm{Quinlan}, \binits{A.R.}}:
Bedtools: the swiss-army tool for genome feature analysis.
Current Protocols in Bioinformatics,
11--12
(2014).
doi:\doiurl{10.1002/0471250953.bi1112s47}
\end{botherref}
\endbibitem

\end{thebibliography}

\newcommand{\BMCxmlcomment}[1]{}

\BMCxmlcomment{

<refgrp>

<bibl id="B1">
  <title><p>The distribution of the flora in the alpine zone</p></title>
  <aug>
    <au><snm>Jaccard</snm><fnm>P.</fnm></au>
  </aug>
  <source>New Phytologist</source>
  <publisher>Wiley-Blackwell</publisher>
  <pubdate>1912</pubdate>
  <volume>11</volume>
  <issue>2</issue>
  <fpage>37</fpage>
  <lpage>50</lpage>
  <url>http://dx.doi.org/10.1111/j.1469-8137.1912.tb05611.x</url>
</bibl>

<bibl id="B2">
  <title><p>An Elementary Mathematical theory of Classification and
  Prediction</p></title>
  <aug>
    <au><snm>Tanimoto</snm><fnm>T.</fnm></au>
  </aug>
  <publisher>New York</publisher>
  <pubdate>1958</pubdate>
</bibl>

<bibl id="B3">
  <title><p>Recent methodological developments in quantitative descriptive
  biogeography</p></title>
  <aug>
    <au><snm>Birks</snm><fnm>H. J. B.</fnm></au>
  </aug>
  <source>Ann. Zool. Fenn.</source>
  <pubdate>1987</pubdate>
  <volume>24</volume>
  <fpage>165</fpage>
  <lpage>178</lpage>
  <url>https://www.jstor.org/stable/23734493</url>
</bibl>

<bibl id="B4">
  <title><p>Null models and fish communities: Evidence of nonrandom
  patterns</p></title>
  <aug>
    <au><snm>Jackson</snm><fnm>D. A.</fnm></au>
    <au><snm>Somers</snm><fnm>K. M.</fnm></au>
    <au><snm>Harvey</snm><fnm>H. H.</fnm></au>
  </aug>
  <source>The American Naturalist</source>
  <pubdate>1992</pubdate>
  <volume>139</volume>
  <issue>5</issue>
  <fpage>930</fpage>
  <lpage>951</lpage>
</bibl>

<bibl id="B5">
  <title><p>The Probabilistic Basis of Jaccard's Index of
  Similarity</p></title>
  <aug>
    <au><snm>Real</snm><fnm>R.</fnm></au>
    <au><snm>Vargas</snm><fnm>J. M.</fnm></au>
  </aug>
  <source>Systematic Biology</source>
  <publisher>Oxford University Press ({OUP})</publisher>
  <pubdate>1996</pubdate>
  <volume>45</volume>
  <issue>3</issue>
  <fpage>380</fpage>
  <lpage>-385</lpage>
  <url>http://dx.doi.org/10.1093/sysbio/45.3.380</url>
</bibl>

<bibl id="B6">
  <title><p>Randomization, Bootstrap and Monte Carlo Methods in
  Biology</p></title>
  <aug>
    <au><snm>Manly</snm><fnm>B. F. J.</fnm></au>
  </aug>
  <publisher>Boca Raton, FL: Chapman \& Hall / CRC Press</publisher>
  <pubdate>2006</pubdate>
</bibl>

<bibl id="B7">
  <title><p>An Introduction to Behavioural Ecology</p></title>
  <aug>
    <au><snm>Davies</snm><fnm>NB</fnm></au>
    <au><snm>Krebs</snm><fnm>JR</fnm></au>
  </aug>
  <publisher>U.S.A.: Wiley-Blackwell</publisher>
  <pubdate>1993</pubdate>
</bibl>

<bibl id="B8">
  <title><p>Essentials of Ecology</p></title>
  <aug>
    <au><snm>Townsend</snm><fnm>CR</fnm></au>
    <au><snm>Begon</snm><fnm>M</fnm></au>
    <au><snm>Harper</snm><fnm>JL</fnm></au>
  </aug>
  <publisher>U.S.A.: Wiley-Blackwell</publisher>
  <pubdate>2002</pubdate>
</bibl>

<bibl id="B9">
  <title><p>Vegetation of the Siskiyou Mountains, Oregon and
  California</p></title>
  <aug>
    <au><snm>Whittaker</snm><fnm>R. H.</fnm></au>
  </aug>
  <source>Ecological Monographs</source>
  <publisher>Wiley-Blackwell</publisher>
  <pubdate>1960</pubdate>
  <volume>30</volume>
  <issue>3</issue>
  <fpage>279</fpage>
  <lpage>-338</lpage>
</bibl>

<bibl id="B10">
  <title><p>Beta Diversity on Geographic Gradients in Britain</p></title>
  <aug>
    <au><snm>Harrison</snm><fnm>S</fnm></au>
    <au><snm>Ross</snm><fnm>SJ</fnm></au>
    <au><snm>Lawton</snm><fnm>JH</fnm></au>
  </aug>
  <source>The Journal of Animal Ecology</source>
  <publisher>{JSTOR}</publisher>
  <pubdate>1992</pubdate>
  <volume>61</volume>
  <issue>1</issue>
  <fpage>151</fpage>
</bibl>

<bibl id="B11">
  <title><p>Measuring beta diversity for presence-absence data</p></title>
  <aug>
    <au><snm>Koleff</snm><fnm>P</fnm></au>
    <au><snm>Gaston</snm><fnm>KJ</fnm></au>
    <au><snm>Lennon</snm><fnm>JJ</fnm></au>
  </aug>
  <source>Journal of Animal Ecology</source>
  <publisher>Wiley-Blackwell</publisher>
  <pubdate>2003</pubdate>
  <volume>72</volume>
  <issue>3</issue>
  <fpage>367</fpage>
  <lpage>-382</lpage>
</bibl>

<bibl id="B12">
  <title><p>The Assembly of Species Communities: Chance or
  Competition?</p></title>
  <aug>
    <au><snm>Connor</snm><fnm>EF</fnm></au>
    <au><snm>Simberloff</snm><fnm>D</fnm></au>
  </aug>
  <source>Ecology</source>
  <publisher>Wiley-Blackwell</publisher>
  <pubdate>1979</pubdate>
  <volume>60</volume>
  <issue>6</issue>
  <fpage>1132</fpage>
  <url>http://dx.doi.org/10.2307/1936961</url>
</bibl>

<bibl id="B13">
  <title><p>Examination of the "null" model of Connor and Simberloff for
  species co-occurrence on islands</p></title>
  <aug>
    <au><snm>Diamond</snm><fnm>J. M.</fnm></au>
    <au><snm>Gilpin</snm><fnm>M. E.</fnm></au>
  </aug>
  <source>Oecologia</source>
  <pubdate>1982</pubdate>
  <volume>52</volume>
  <fpage>64</fpage>
  <lpage>74</lpage>
</bibl>

<bibl id="B14">
  <title><p>Factors Contributing to Non-Randomness in Species Co-Occurrences on
  Islands</p></title>
  <aug>
    <au><snm>Gilpin</snm><fnm>M. E.</fnm></au>
    <au><snm>Diamond</snm><fnm>J. M.</fnm></au>
  </aug>
  <source>Oecologia</source>
  <pubdate>1982</pubdate>
  <volume>52</volume>
  <fpage>75</fpage>
  <lpage>84</lpage>
</bibl>

<bibl id="B15">
  <title><p>Methods for detecting non-randomness in species co-occurrences: a
  contribution</p></title>
  <aug>
    <au><snm>Wilson</snm><fnm>J.B.</fnm></au>
  </aug>
  <source>Oecologia</source>
  <pubdate>1987</pubdate>
  <volume>73</volume>
  <issue>4</issue>
  <fpage>579</fpage>
  <lpage>-582</lpage>
</bibl>

<bibl id="B16">
  <title><p>A Note on the Analysis of Species Co-Occurrences</p></title>
  <aug>
    <au><snm>Manly</snm><fnm>B. F. J.</fnm></au>
  </aug>
  <source>Ecology</source>
  <pubdate>1995</pubdate>
  <volume>76</volume>
  <issue>4</issue>
  <fpage>1109</fpage>
  <lpage>-1115</lpage>
</bibl>

<bibl id="B17">
  <title><p>Null matrices and the analysis of species
  co-occurrencesSanderson</p></title>
  <aug>
    <au><snm>Sanderson</snm><fnm>J.</fnm></au>
    <au><snm>Moulton</snm><fnm>M.</fnm></au>
    <au><snm>Selfridge</snm><fnm>R.</fnm></au>
  </aug>
  <source>Oecologia</source>
  <pubdate>1998</pubdate>
  <volume>116</volume>
  <issue>1--2</issue>
  <fpage>275</fpage>
  <lpage>-283</lpage>
</bibl>

<bibl id="B18">
  <title><p>Stochastic and deterministic processes jointly structure tropical
  arthropod communities</p></title>
  <aug>
    <au><snm>Ellwood</snm><fnm>MDF</fnm></au>
    <au><snm>Manica</snm><fnm>A</fnm></au>
    <au><snm>Foster</snm><fnm>WA</fnm></au>
  </aug>
  <source>Ecology Letters</source>
  <publisher>Wiley-Blackwell</publisher>
  <pubdate>2009</pubdate>
  <volume>12</volume>
  <issue>4</issue>
  <fpage>277</fpage>
  <lpage>-284</lpage>
</bibl>

<bibl id="B19">
  <title><p>Disentangling the importance of ecological niches from stochastic
  processes across scales</p></title>
  <aug>
    <au><snm>Chase</snm><fnm>JM</fnm></au>
    <au><snm>Myers</snm><fnm>JA</fnm></au>
  </aug>
  <source>Philosophical transactions of the Royal Society B: Biological
  sciences</source>
  <publisher>The Royal Society</publisher>
  <pubdate>2011</pubdate>
  <volume>366</volume>
  <issue>1576</issue>
  <fpage>2351</fpage>
  <lpage>-2363</lpage>
</bibl>

<bibl id="B20">
  <title><p>Co-occurrence based assessment of habitat generalists and
  specialists: A new approach for the measurement of niche width</p></title>
  <aug>
    <au><snm>Fridley</snm><fnm>JD</fnm></au>
    <au><snm>Vandermast</snm><fnm>DB</fnm></au>
    <au><snm>Kuppinger</snm><fnm>DM</fnm></au>
    <au><snm>Manthey</snm><fnm>M</fnm></au>
    <au><snm>Peet</snm><fnm>RK</fnm></au>
  </aug>
  <source>Journal of Ecology</source>
  <publisher>Wiley Online Library</publisher>
  <pubdate>2007</pubdate>
  <volume>95</volume>
  <issue>4</issue>
  <fpage>707</fpage>
  <lpage>-722</lpage>
</bibl>

<bibl id="B21">
  <title><p>The geographic scaling of biotic interactions</p></title>
  <aug>
    <au><snm>Ara{\'{u}}jo</snm><fnm>MB</fnm></au>
    <au><snm>Rozenfeld</snm><fnm>A</fnm></au>
  </aug>
  <source>Ecography</source>
  <publisher>Wiley-Blackwell</publisher>
  <pubdate>2013</pubdate>
</bibl>

<bibl id="B22">
  <title><p>Similarity of Binary Data</p></title>
  <aug>
    <au><snm>Baroni Urbani</snm><fnm>C</fnm></au>
    <au><snm>Buser</snm><fnm>MW</fnm></au>
  </aug>
  <source>Systematic Zoology</source>
  <publisher>Oxford University Press ({OUP})</publisher>
  <pubdate>1976</pubdate>
  <volume>25</volume>
  <issue>3</issue>
  <fpage>251</fpage>
</bibl>

<bibl id="B23">
  <title><p>A statistical table for the degree of coexistence between two
  species</p></title>
  <aug>
    <au><snm>Baroni Urbani</snm><fnm>C</fnm></au>
  </aug>
  <source>Oecologia</source>
  <publisher>Springer Nature</publisher>
  <pubdate>1979</pubdate>
  <volume>44</volume>
  <issue>3</issue>
  <fpage>287</fpage>
  <lpage>-289</lpage>
  <url>https://doi.org/10.1007
</bibl>

<bibl id="B24">
  <title><p>A probabilistic model for analysing species
  co-occurrence</p></title>
  <aug>
    <au><snm>Veech</snm><fnm>JA</fnm></au>
  </aug>
  <source>Global Ecology and Biogeography</source>
  <pubdate>2013</pubdate>
  <volume>22</volume>
  <fpage>252</fpage>
  <lpage>260</lpage>
</bibl>

<bibl id="B25">
  <title><p>cooccur: Probabilistic Species Co-Occurrence Analysis
  inR</p></title>
  <aug>
    <au><snm>Griffith</snm><fnm>DM</fnm></au>
    <au><snm>Veech</snm><fnm>JA</fnm></au>
    <au><snm>Marsh</snm><fnm>CJ</fnm></au>
  </aug>
  <source>Journal of Statistical Software</source>
  <pubdate>2016</pubdate>
  <volume>69</volume>
</bibl>

<bibl id="B26">
  <title><p>R: A Language and Environment for Statistical Computing</p></title>
  <aug>
    <au><cnm>{R Core Team}</cnm></au>
  </aug>
  <publisher>Vienna, Austria</publisher>
  <pubdate>2017</pubdate>
  <url>https://www.R-project.org</url>
</bibl>

<bibl id="B27">
  <title><p>All of Statistics: A Concise Course in Statistical
  Inference</p></title>
  <aug>
    <au><snm>Wasserman</snm><fnm>L.</fnm></au>
  </aug>
  <publisher>New York, U.S.A.: Springer</publisher>
  <pubdate>2010</pubdate>
</bibl>

<bibl id="B28">
  <title><p>{IsoSpec}: Hyperfast Fine Structure Calculator</p></title>
  <aug>
    <au><snm>{\L}{\k{a}}cki</snm><fnm>MK</fnm></au>
    <au><snm>Startek</snm><fnm>M</fnm></au>
    <au><snm>Valkenborg</snm><fnm>D</fnm></au>
    <au><snm>Gambin</snm><fnm>A</fnm></au>
  </aug>
  <source>Analytical Chemistry</source>
  <publisher>American Chemical Society ({ACS})</publisher>
  <pubdate>2017</pubdate>
  <volume>89</volume>
  <issue>6</issue>
  <fpage>3272</fpage>
  <lpage>3277</lpage>
  <url>https://doi.org/10.1021
</bibl>

<bibl id="B29">
  <title><p>An Introduction to the Bootstrap</p></title>
  <aug>
    <au><snm>Efron</snm><fnm>B.</fnm></au>
    <au><snm>Tibshirani</snm><fnm>R.</fnm></au>
  </aug>
  <publisher>Boca Raton, Florida: Chapman \& Hall / CRC Press</publisher>
  <pubdate>1994</pubdate>
</bibl>

<bibl id="B30">
  <title><p>Species Number and Compositional Similarity of the Galapagos Flora
  and Avifauna</p></title>
  <aug>
    <au><snm>Connor</snm><fnm>E. F.</fnm></au>
    <au><snm>Simberloff</snm><fnm>D</fnm></au>
  </aug>
  <source>Ecological Monographs</source>
  <pubdate>1978</pubdate>
  <volume>48</volume>
  <fpage>219</fpage>
  <lpage>248</lpage>
</bibl>

<bibl id="B31">
  <title><p>EcoSimR: Null model analysis for ecological data</p></title>
  <aug>
    <au><snm>Gotelli</snm><fnm>NJ</fnm></au>
    <au><snm>Hart</snm><fnm>EM</fnm></au>
    <au><snm>Ellison</snm><fnm>AM</fnm></au>
  </aug>
  <pubdate>2015</pubdate>
  <url>http://github.com/gotellilab/EcoSimR</url>
  <note>R package version 0.1.0</note>
</bibl>

<bibl id="B32">
  <title><p>vegan: Community Ecology Package</p></title>
  <aug>
    <au><snm>Oksanen</snm><fnm>J</fnm></au>
    <au><snm>Blanchet</snm><fnm>FG</fnm></au>
    <au><snm>Friendly</snm><fnm>M</fnm></au>
    <au><snm>Kindt</snm><fnm>R</fnm></au>
    <au><snm>Legendre</snm><fnm>P</fnm></au>
    <au><snm>McGlinn</snm><fnm>D</fnm></au>
    <au><snm>Minchin</snm><fnm>PR</fnm></au>
    <au><snm>O'Hara</snm><fnm>R. B.</fnm></au>
    <au><snm>Simpson</snm><fnm>GL</fnm></au>
    <au><snm>Solymos</snm><fnm>P</fnm></au>
    <au><snm>Stevens</snm><fnm>MHH</fnm></au>
    <au><snm>Szoecs</snm><fnm>E</fnm></au>
    <au><snm>Wagner</snm><fnm>H</fnm></au>
  </aug>
  <pubdate>2017</pubdate>
  <url>https://CRAN.R-project.org/package=vegan</url>
  <note>R package version 2.4-5</note>
</bibl>

<bibl id="B33">
  <title><p>Statistical significance for genomewide studies</p></title>
  <aug>
    <au><snm>Storey</snm><fnm>J. D.</fnm></au>
    <au><snm>Tibshirani</snm><fnm>R.</fnm></au>
  </aug>
  <source>Proceedings of the National Academy of Sciences</source>
  <publisher>National Acad Sciences</publisher>
  <pubdate>2003</pubdate>
  <volume>100</volume>
  <issue>16</issue>
  <fpage>9440</fpage>
  <lpage>9445</lpage>
</bibl>

<bibl id="B34">
  <title><p>Climate interacts with anthropogenic drivers to determine
  extirpation dynamics</p></title>
  <aug>
    <au><snm>Comte</snm><fnm>L</fnm></au>
    <au><snm>Hugueny</snm><fnm>B</fnm></au>
    <au><snm>Grenouillet</snm><fnm>G</fnm></au>
  </aug>
  <source>Ecography</source>
  <publisher>Wiley Online Library</publisher>
  <pubdate>2016</pubdate>
  <volume>39</volume>
  <issue>10</issue>
  <fpage>1008</fpage>
  <lpage>-1016</lpage>
</bibl>

<bibl id="B35">
  <title><p>Similarity Coefficients for Binary Chemoinformatics Data: Overview
  and Extended Comparison Using Simulated and Real Data Sets</p></title>
  <aug>
    <au><snm>Todeschini</snm><fnm>R</fnm></au>
    <au><snm>Consonni</snm><fnm>V</fnm></au>
    <au><snm>Xiang</snm><fnm>H</fnm></au>
    <au><snm>Holliday</snm><fnm>J</fnm></au>
    <au><snm>Buscema</snm><fnm>M</fnm></au>
    <au><snm>Willett</snm><fnm>P</fnm></au>
  </aug>
  <source>J. Chem. Inf. Model.</source>
  <publisher>American Chemical Society ({ACS})</publisher>
  <pubdate>2012</pubdate>
  <volume>52</volume>
  <issue>11</issue>
  <fpage>2884</fpage>
  <lpage>-2901</lpage>
  <url>http://dx.doi.org/10.1021/ci300261r</url>
</bibl>

<bibl id="B36">
  <title><p>{EC}-{BLAST}: a tool to automatically search and compare enzyme
  reactions</p></title>
  <aug>
    <au><snm>Rahman</snm><fnm>SA</fnm></au>
    <au><snm>Cuesta</snm><fnm>SM</fnm></au>
    <au><snm>Furnham</snm><fnm>N</fnm></au>
    <au><snm>Holliday</snm><fnm>GL</fnm></au>
    <au><snm>Thornton</snm><fnm>JM</fnm></au>
  </aug>
  <source>Nature Methods</source>
  <publisher>Springer Nature</publisher>
  <pubdate>2014</pubdate>
  <volume>11</volume>
  <issue>2</issue>
  <fpage>171</fpage>
  <lpage>-174</lpage>
  <url>https://doi.org/10.1038
</bibl>

<bibl id="B37">
  <title><p>Why is Tanimoto index an appropriate choice for fingerprint-based
  similarity calculations?</p></title>
  <aug>
    <au><snm>Bajusz</snm><fnm>D</fnm></au>
    <au><snm>R{\'{a}}cz</snm><fnm>A</fnm></au>
    <au><snm>H{\'{e}}berger</snm><fnm>K</fnm></au>
  </aug>
  <source>J Cheminform</source>
  <publisher>Springer Science $\mathplus$ Business Media</publisher>
  <pubdate>2015</pubdate>
  <volume>7</volume>
  <issue>1</issue>
  <url>http://dx.doi.org/10.1186/s13321-015-0069-3</url>
</bibl>

<bibl id="B38">
  <title><p>BEDTools: the Swiss-army tool for genome feature
  analysis</p></title>
  <aug>
    <au><snm>Quinlan</snm><fnm>AR</fnm></au>
  </aug>
  <source>Current Protocols in Bioinformatics</source>
  <publisher>Wiley Online Library</publisher>
  <pubdate>2014</pubdate>
  <fpage>11</fpage>
  <lpage>-12</lpage>
</bibl>

</refgrp>
} 
\clearpage
\section*{Figures}

\begin{figure}[hbt]
\begin{center}
    \includegraphics[width=.5\textwidth]{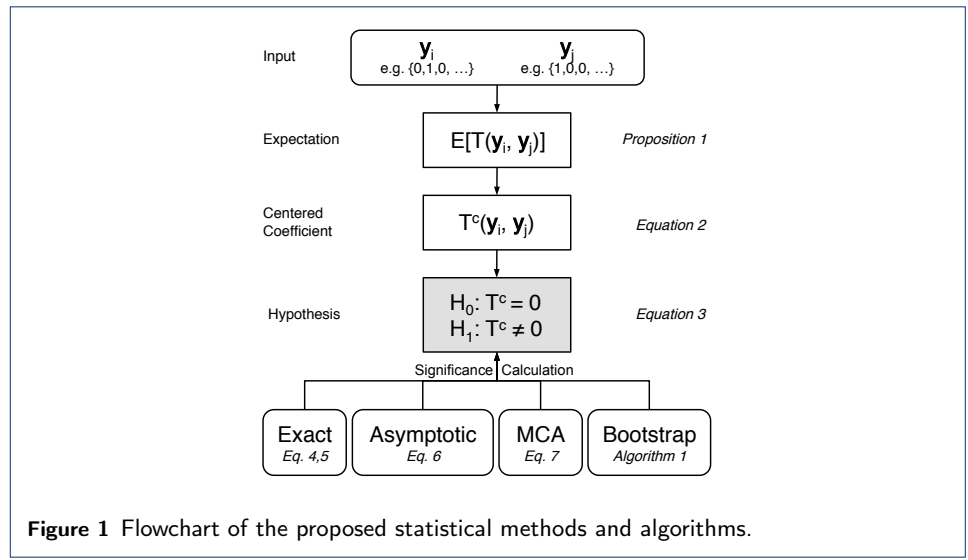}
	\caption{\small Flowchart of the proposed statistical methods and algorithms.}
	\label{flowchart}
\end{center}
\end{figure}

\begin{figure}[hbt]
\begin{center}
    \includegraphics[width=.9\textwidth]{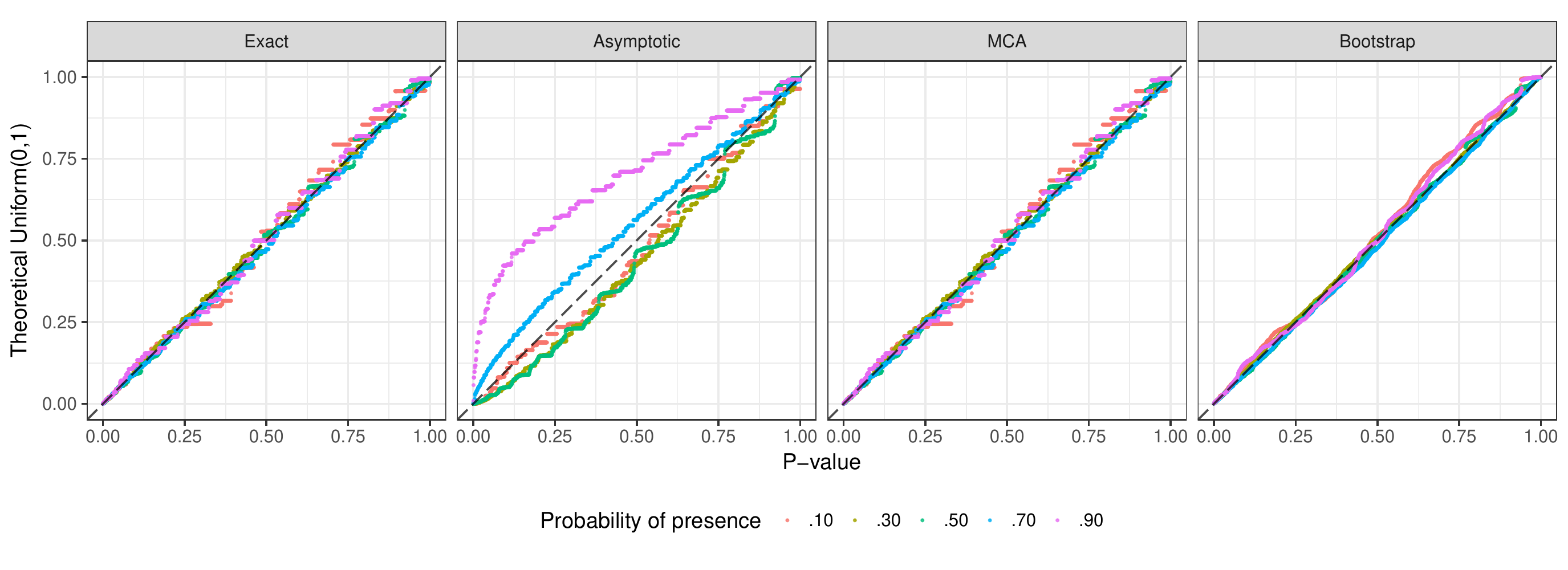}
	\caption{\small P-values of similarity among independent presence-absence vectors of $m=100$, with a wide range of probabilities $p=.1,.3,.5,.7,.9$. In each scenario, 2000 independent variables are simulated and tested using four proposed methods. The diagonal lines indicate a theoretically correct Uniform(0,1) distribution.}
	\label{NullOnly_QQ_nvar100pAll}
\end{center}
\end{figure}

\begin{figure}[ht]
\begin{center}
    \includegraphics[width=.9\textwidth]{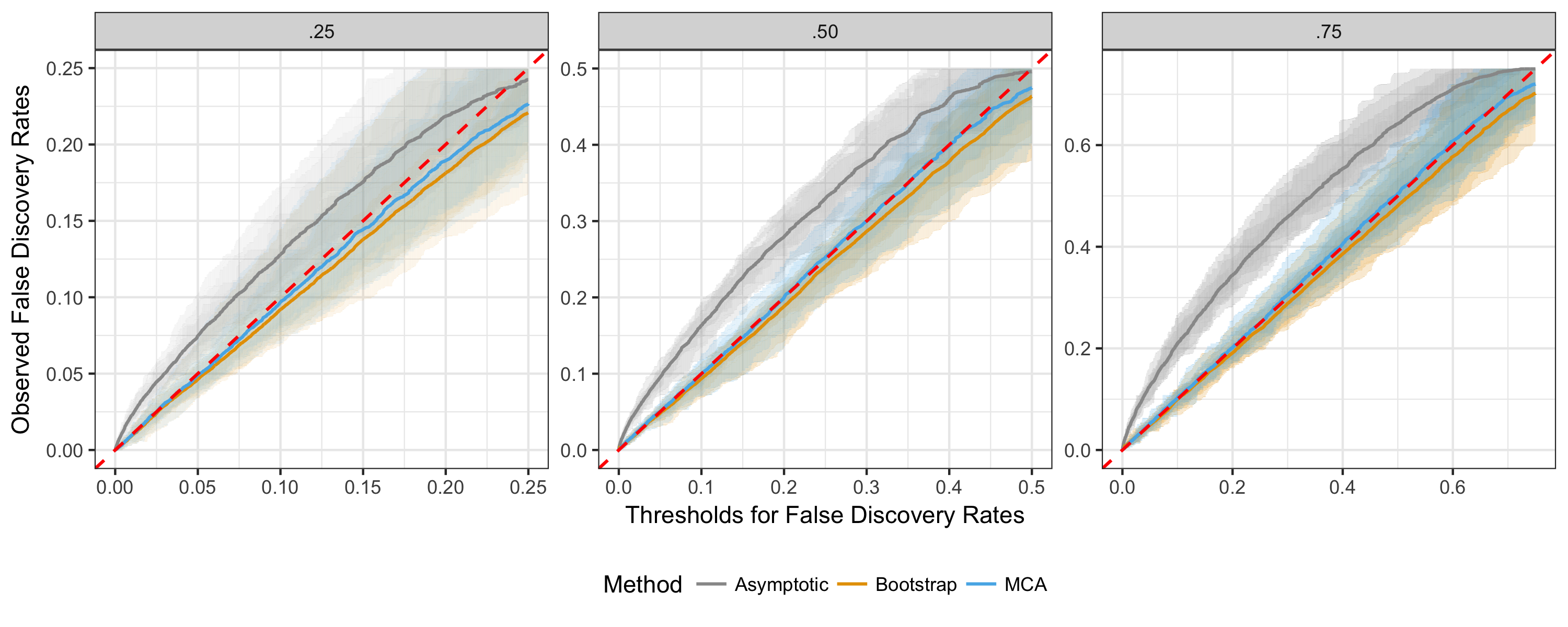}
	\caption{\small False discovery rate (FDR) estimates from a mixture of independent and dependent presence-absence vectors. In 3 separate scenarios with null proportions $\pi_0 = .25, .50, .75$, 2000 presence-absence vectors of $m=200$ are simulated with occurrence probabilities of $p=.5$. Each simulation scenario is repeated 20 times and the proposed methods are used to automatically compute p-values and q-values. FDR thresholds are plotted against observed false discovery proportions, where a downward deviation from a theoretically correct diagonal red line indicates a conservative behavior.}
	\label{FDR_nvar200}
\end{center}
\end{figure}

\begin{figure}[ht]
\begin{center}
    \includegraphics[width=.6\textwidth]{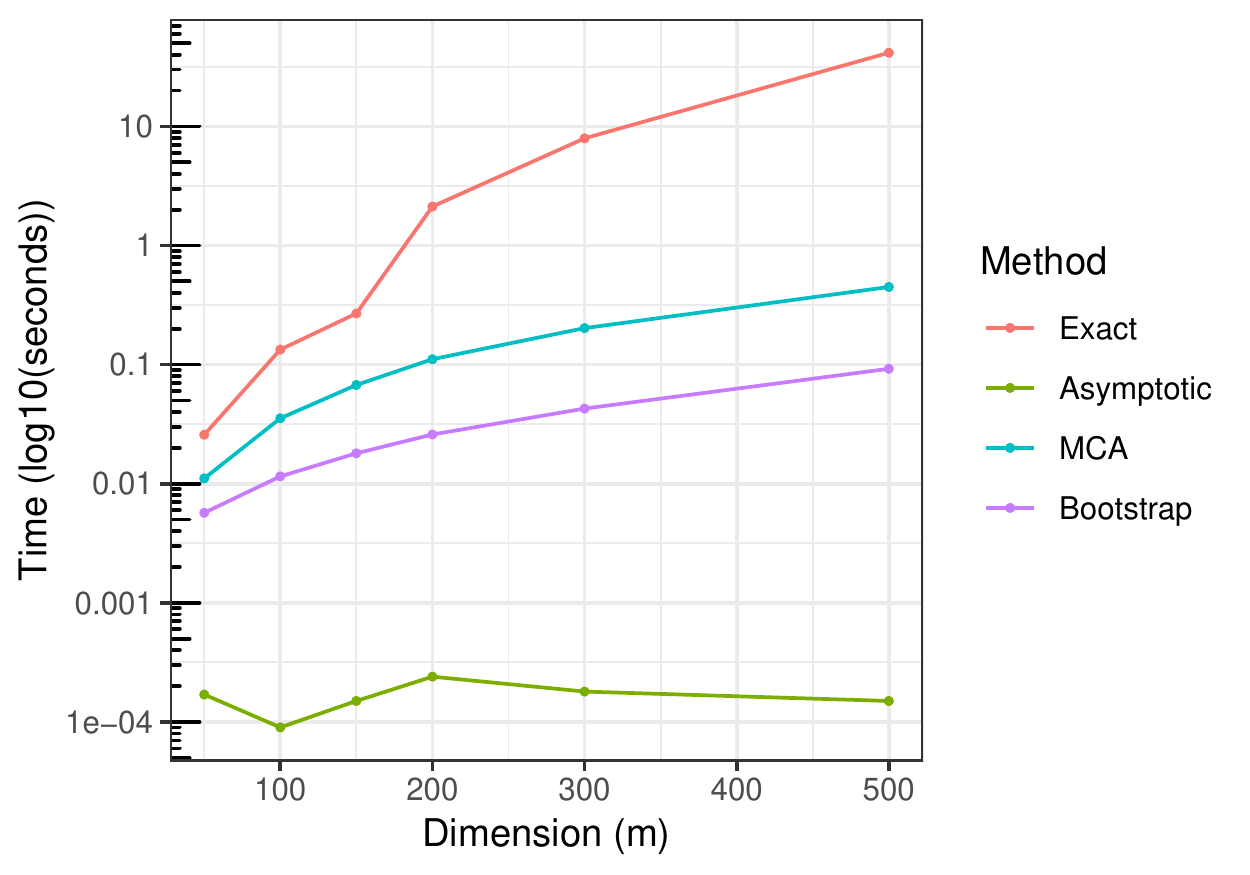}
	\caption{\small Computational runtimes of our 4 proposed methods. The means of 100 independent runs are plotted against an increasing size of dimension $m= 50, \ldots, 500$. Compared to the exact solution, the bootstrap and measure concentration algorithm (MCA) provide vast improvements in speed whose relative efficiency increases with higher dimension.}
	\label{runtime_main}
\end{center}
\end{figure}

\begin{figure}[t]
\begin{center}
    \includegraphics[width=.5\textwidth]{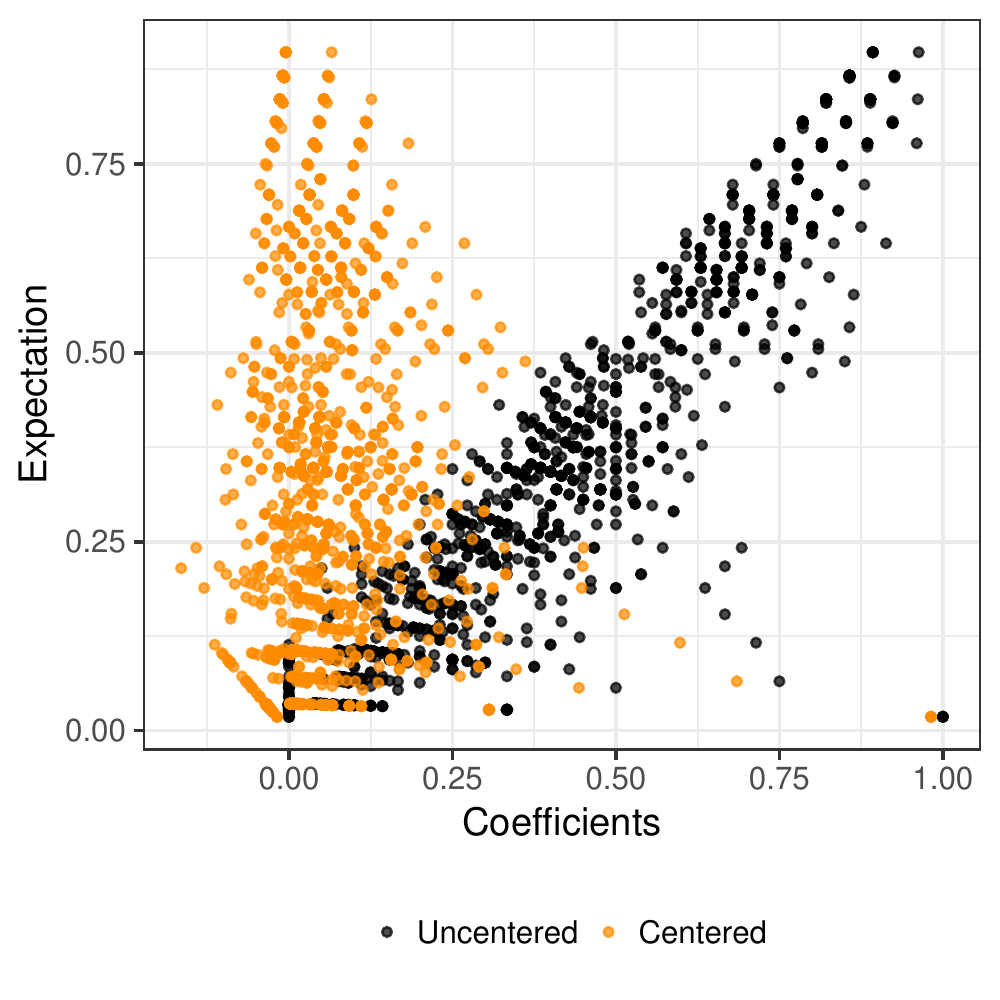}
	\caption{\small Comparison of uncentered and centered Jaccard/Tanimoto coefficients from the bird dataset. The conventional uncentered coefficients are shown to be strongly dependent on expectation under independence. By centering each coefficient by its expectation, the proposed centered coefficients alleviate this dependency.}
	\label{compare_centering}
\end{center}
\end{figure}

\begin{figure}[t]
\begin{center}
    \includegraphics[width=.8\textwidth]{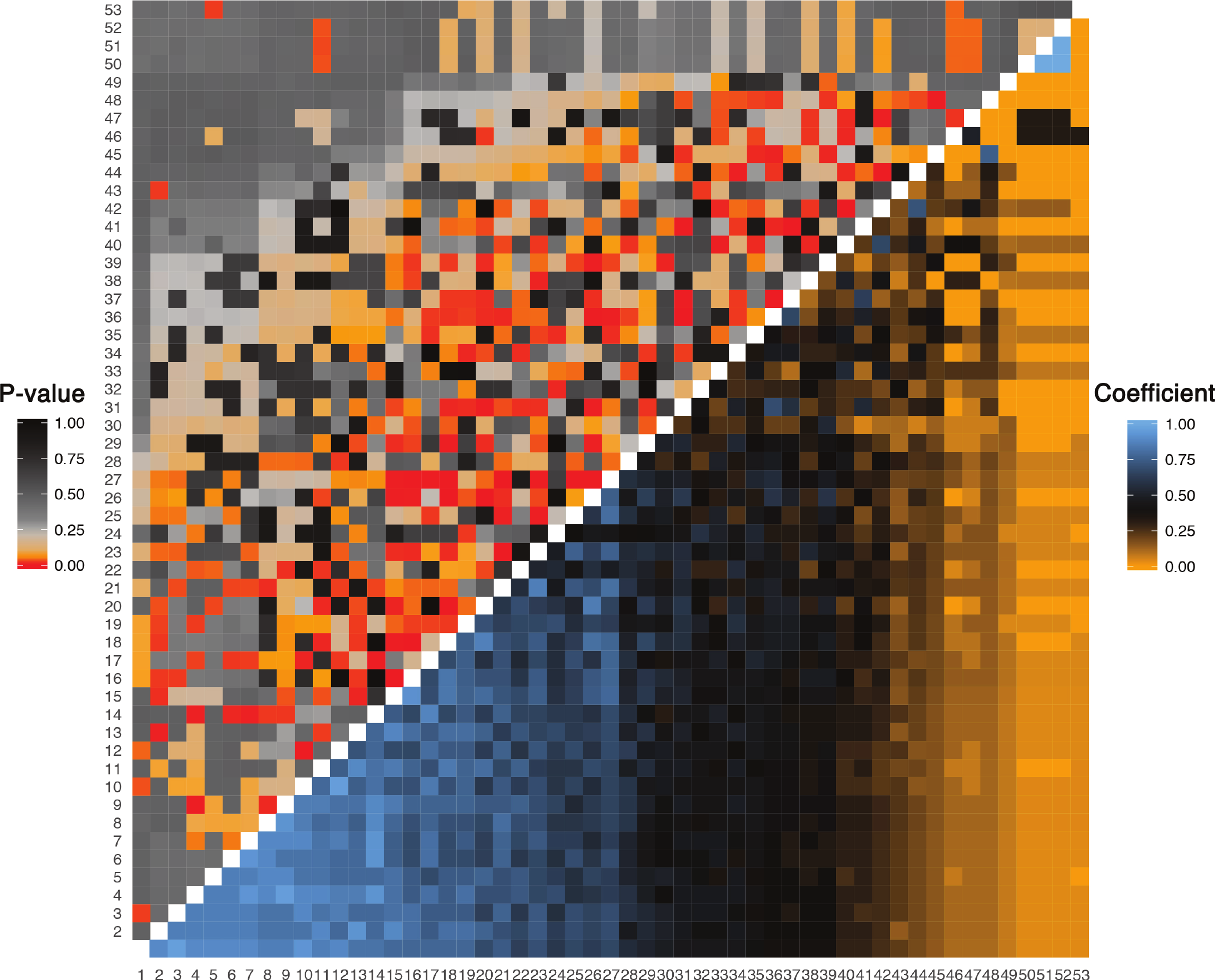}
	\caption{\small Heatmap of uncentered Jaccard/Tanimoto coefficients and their p-values. Similarity among 53 bird species in 28 islands of Vanuatu are tested using the proposed method. Species are ordered from high to low occurrences, that are highly correlated with Jaccard/Tanimoto coefficients (p-value $< 2.2 \times 10^{-16}$). The upper triangle shows the p-values from our methods, whereas the lower triangle the observed Jaccard/Tanimoto coefficients.}
	\label{bird_heatmap_combined}
\end{center}
\end{figure}



\clearpage
\beginsupplement
\section*{Supplementary Figures} 
\begin{figure}[ht]
\begin{center}
    \includegraphics[width=.9\textwidth]{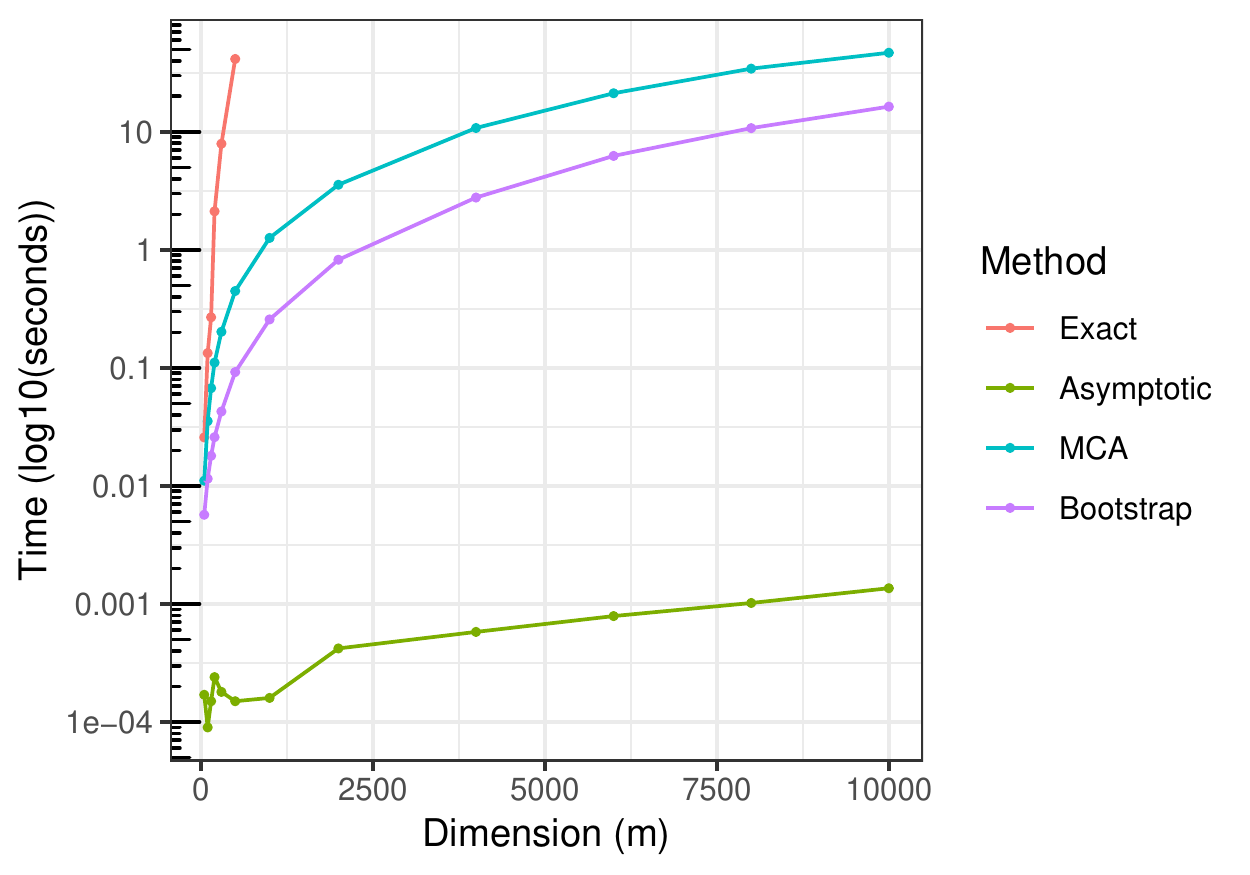}
	\caption{\small Computational runtimes when testing similarity between presence-absence data upto $m=10000$. We ran the proposed 4 methods to compute p-values for a wide range of dimension $m$. For each $m$, 100 independent simulations are conducted. Note that for $m \geq 1000$, the exact solution did not compute in a reasonable time. The bootstrap and measure concentration algorithm (MCA) are orders of magnitude faster than the exact solution. The asymptotic solution is instantaneous regardless of $m$.   }
	\label{runtime_supp}
\end{center}
\end{figure}

\begin{figure}[ht]
\begin{center}
    \includegraphics[width=.8\textwidth]{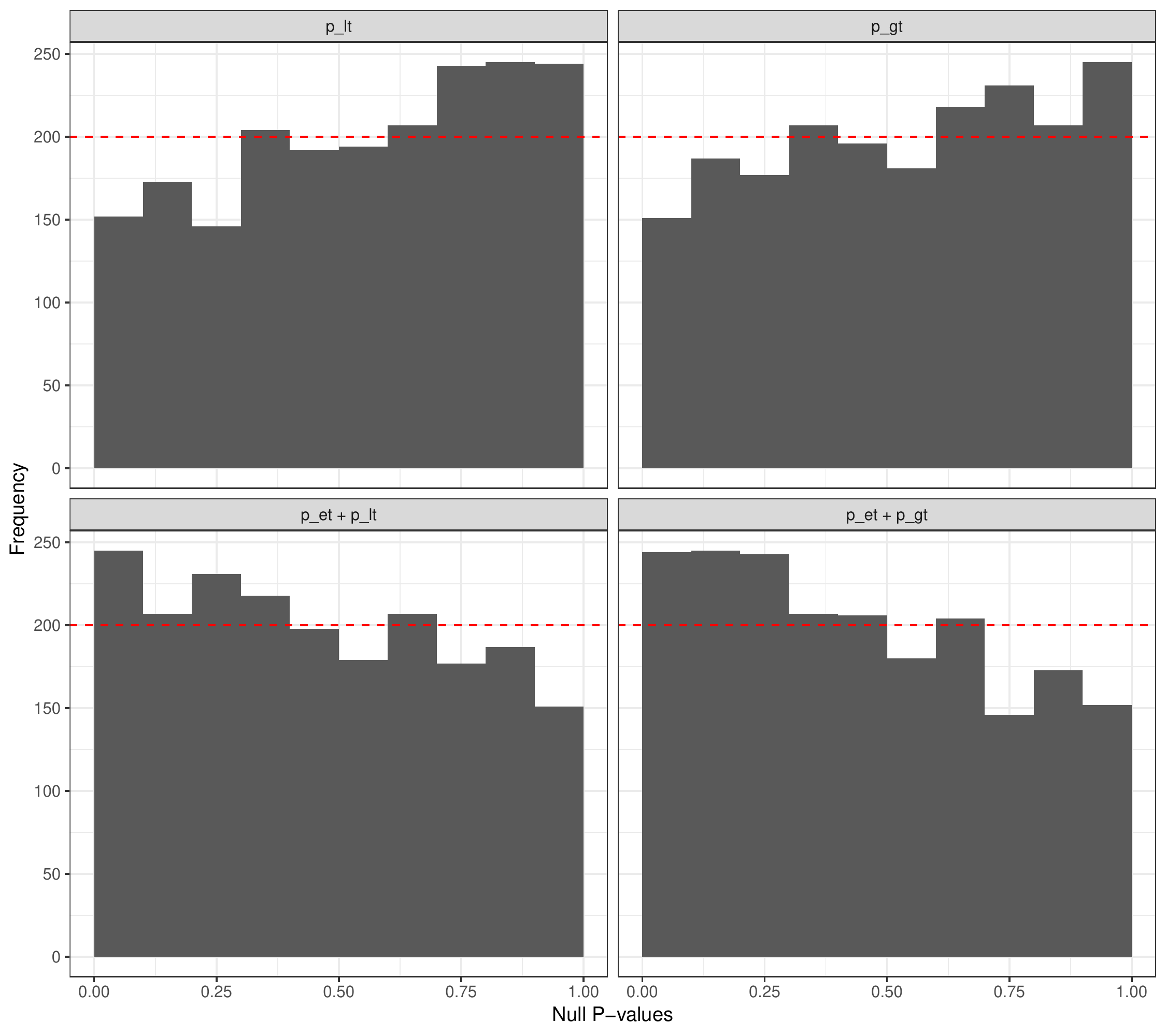}
	\caption{\small Combinatoric p-values of similarity among independent presence-absence vectors of $m=200$ with $p=.5$. In each scenario, 2000 independent variables are simulated and tested using a combinatorics \cite{Veech2013}. \cite{Veech2013} recommends $\textnormal{p}_\textnormal{lt} + \textnormal{p}_\textnormal{et}$ and $\textnormal{p}_\textnormal{gt} + \textnormal{p}_\textnormal{et}$ as p-values. The dashed red lines indicate theoretically correct Uniform distributions.}
	\label{Veech_NullPvalues}
\end{center}
\end{figure}

\begin{figure}[ht]
\begin{center}
    \includegraphics[width=.8\textwidth]{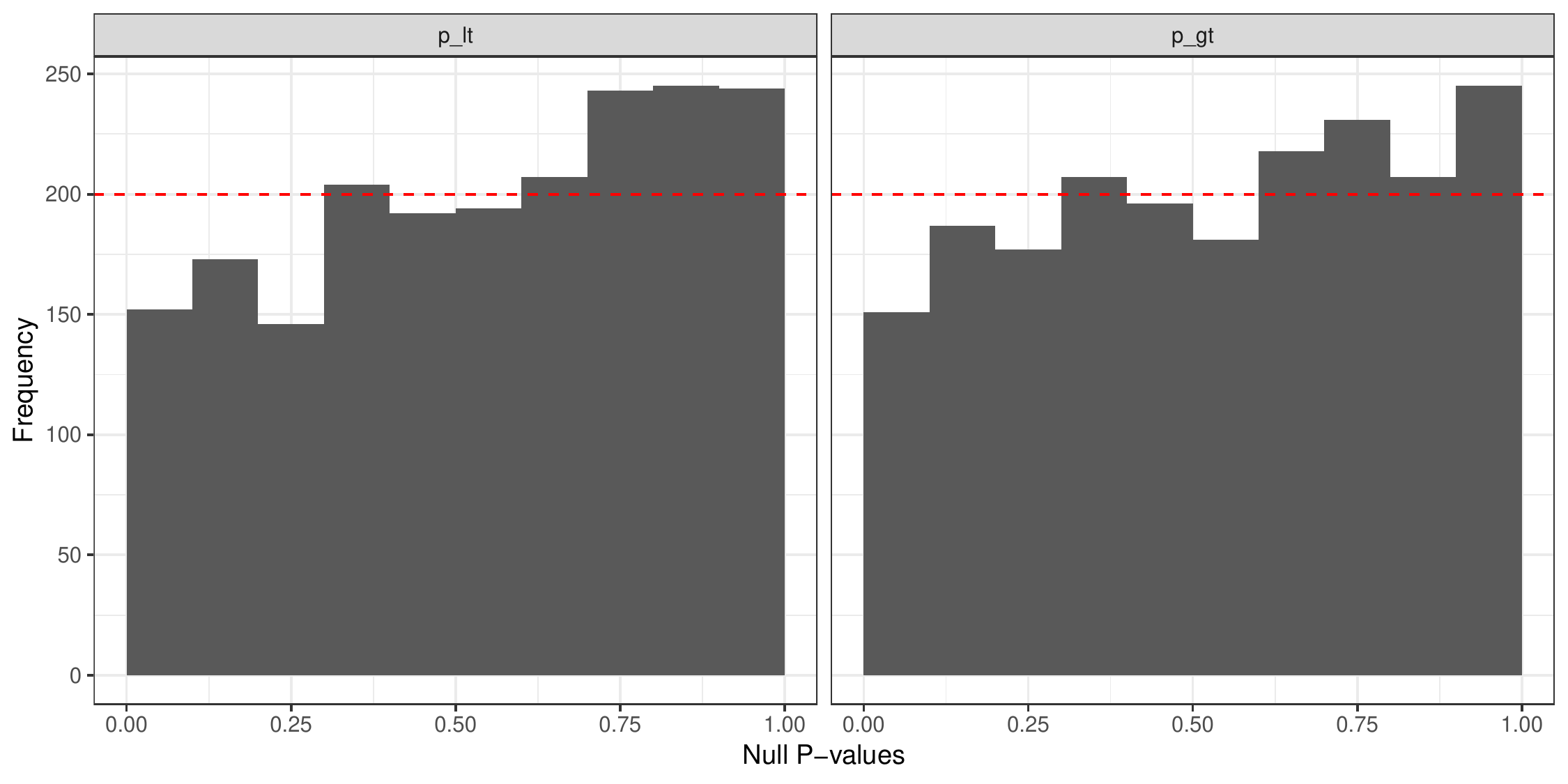}
	\caption{\small Hypergeometric p-values of similarity among independent presence-absence vectors of $m=200$ with $p=.5$. We used a hypergeometric distribution \cite{Griffith2016} to obtain p-values of similarity between independent species. The original authors suggested that $\textnormal{p}_\textnormal{gt}$ and $\textnormal{p}_\textnormal{lt}$ can be ``interpreted and reported as p-values''. The dashed red lines indicate theoretically correct Uniform distributions.}
	\label{Cooccur_NullPvalues}
\end{center}
\end{figure}

\begin{figure}[h]
\begin{center}
    \includegraphics[width=.8\textwidth]{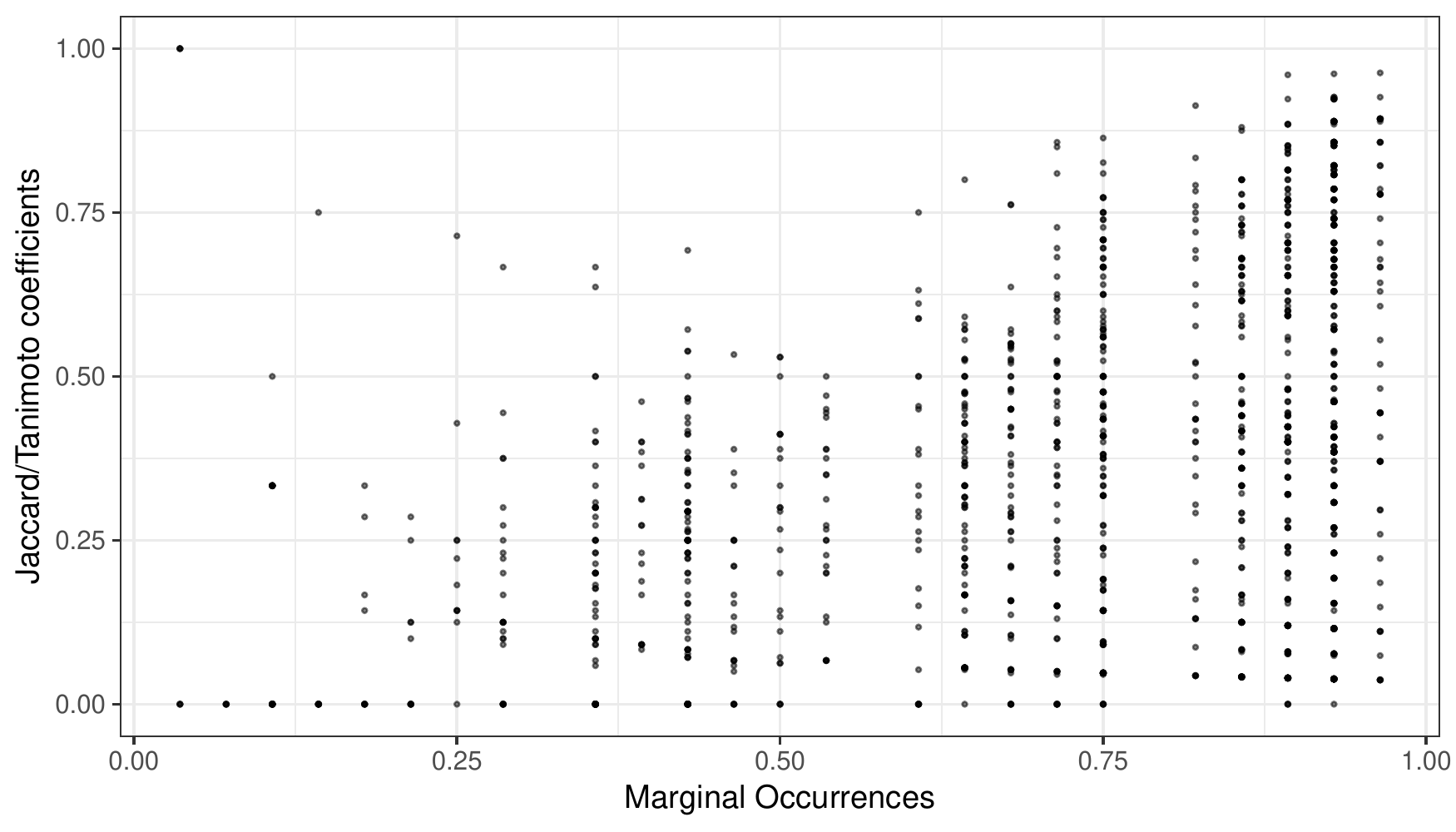}
	\caption{\small Scatterplot of marginal occurrences of 53 bird species and Jaccard/Tanimoto coefficients. As expected, we observe high correlation (Pearson correlation $= 0.43$) between marginal occurrences and Jaccard/Tanimoto coefficients.}
	\label{bird_occurrences_coefficients}
\end{center}
\end{figure}
	
\begin{figure}[h]
\begin{center}
    \includegraphics[width=.8\textwidth]{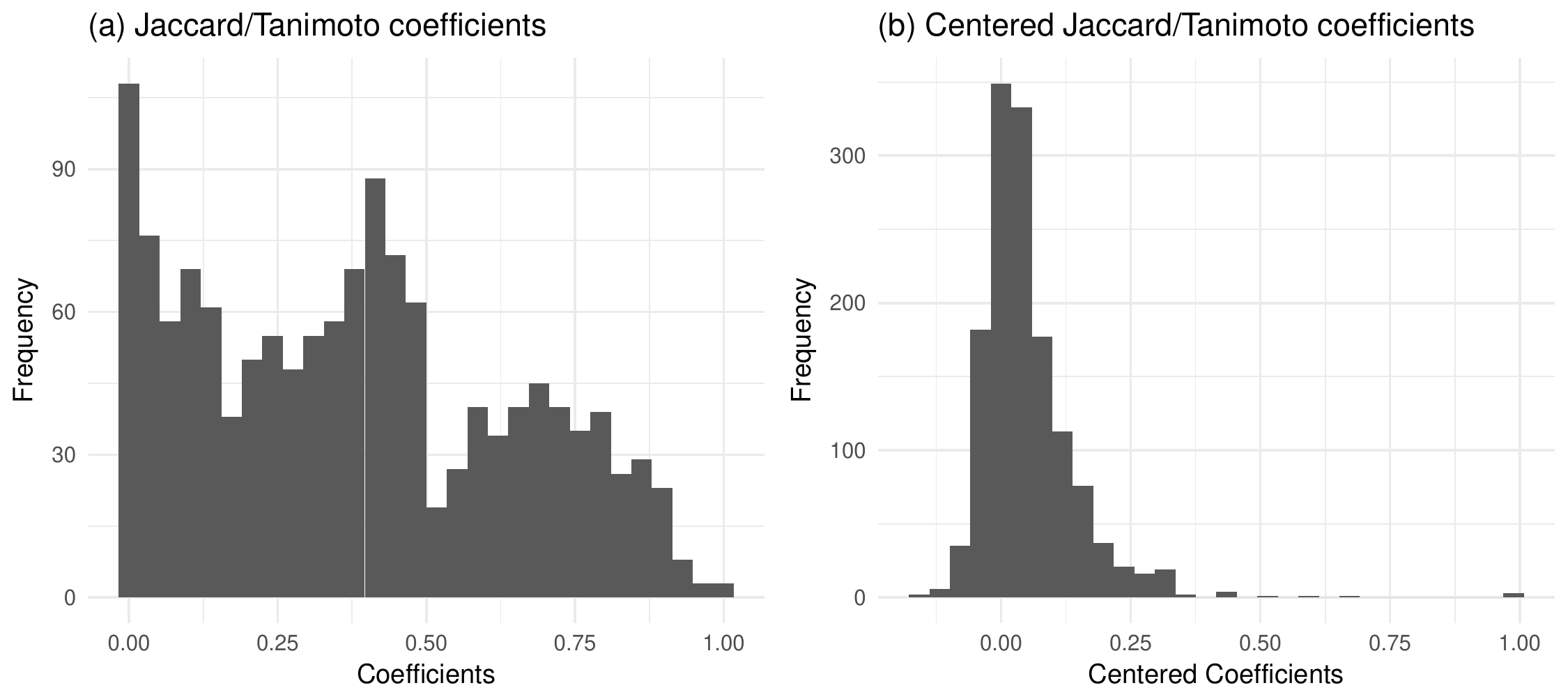}
	\caption{\small Histograms of conventional and centered Jaccard/Tanimoto similarity coefficients. The conventional (uncentered) Jaccard/Tanimoto coefficients are centered by their expected values under the independence assumption.}
	\label{jaccard_histograms}
\end{center}
\end{figure}

\begin{figure}[h]
\begin{center}
    \includegraphics[width=.6\textwidth]{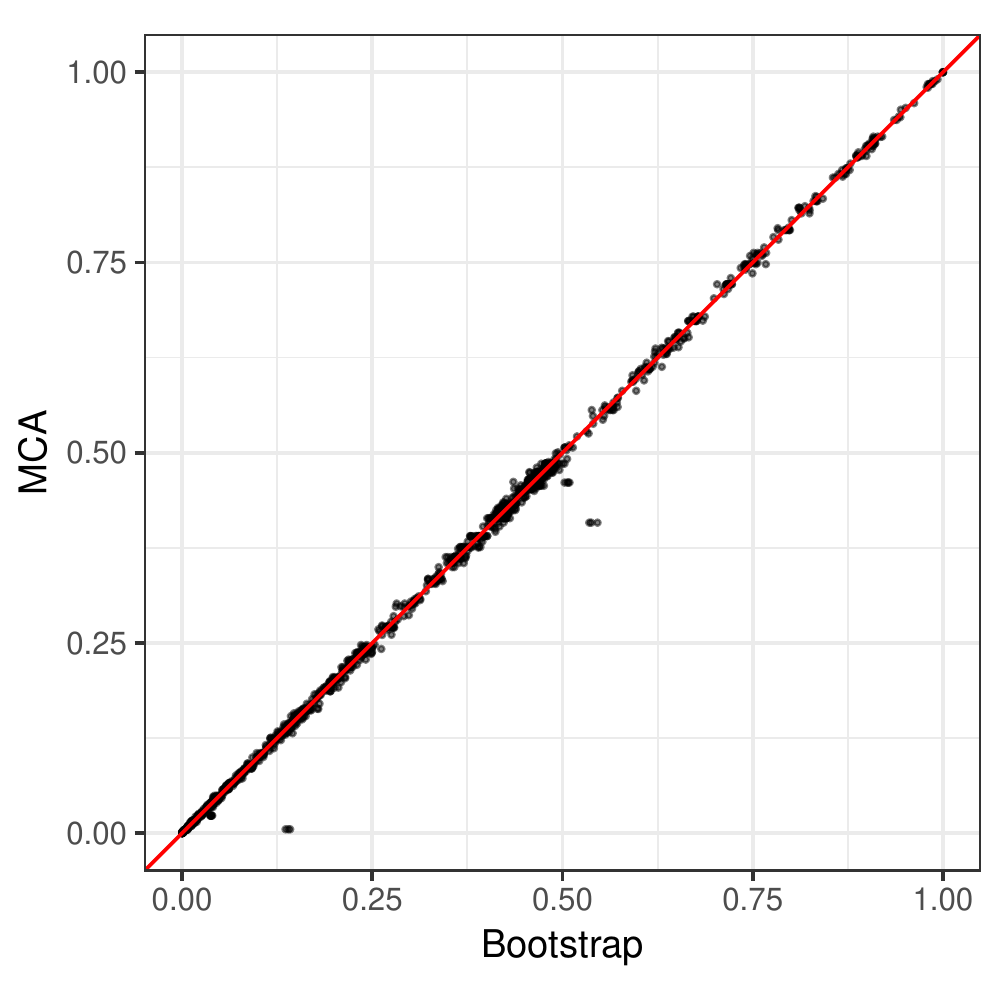}
	\caption{\small Comparison of p-values from the bootstrap and measure concentration algorithm (MCA). Both algorithms were applied on 1378 co-occurrences of bird species. The difference between estimated p-values from two methods is minimal with a mean squared deviation of $1.15 \times 10^{-4}$. The diagonal red line indicates the identity.}
	\label{bird_pvalue_comparison}
\end{center}
\end{figure}

\end{backmatter}
\end{document}